\def\maketag@@@#1{\hbox{\m@th\normalfont\normalsize#1}}
\newcommand{\subparagraph}{}
\titlespacing*{\section}{2pt}{1\baselineskip}{0.9\baselineskip}
\def\mindex#1{\index{#1}}
\def\sq{\hbox{\rlap{$\sqcap$}$\sqcup$}}
\def\qed{\ifmmode\sq\else{\unskip\nobreak\hfil
\penalty50\hskip1em\null\nobreak\hfil\sq
\parfillskip=0pt\finalhyphendemerits=0\endgraf}\fi\medskip}
\long\def\defbox#1{\framebox[.9\hsize][c]{\parbox{.85\hsize}{%
\parindent=0pt
\baselineskip=12pt plus .1pt      
\parskip=6pt plus 1.5pt minus 1pt 
 #1}}}
\long\def\beginbox#1\endbox{\subsection*{}%
\hbox{\hspace{.05\hsize}\defbox{\medskip#1\bigskip}}%
\subsection*{}}
\def\endbox{}
\def\diag{{\text{diag}}}
\def\tr{\mathsf{tr}}
\newsavebox{\junk}
\savebox{\junk}[1.6mm]{\hbox{$|\!|\!|$}}
\def\argmin{\mathop{\rm arg\, min}}
\def\Re{\field{R}}
\def\bC{{\mathbb C}}
\def\bE{{\mathbb E}}
\def\bR{{\mathbb R}}
\def\bS{{\mathbb S}}
\def\bfA{{\bf A}}
\def\bfB{{\bf B}}
\def\bfC{{\bf C}}
\def\bfD{{\bf D}}
\def\bfG{{\bf G}}
\def\bfI{{\bf I}}
\def\bfK{{\bf K}}
\def\bfP{{\bf P}}
\def\bfR{{\bf R}}
\def\bfS{{\bf S}}
\def\bfU{{\bf U}}
\def\bfV{{\bf V}}
\def\bfW{{\bf W}}
\def\bfX{{\bf X}}
\def\bfY{{\bf Y}}
\def\bfZ{{\bf Z}}
\def\bfa{{\bf a}}
\def\bfg{{\bf g}}
\def\bfh{{\bf h}}
\def\bfm{{\bf m}}
\def\bfn{{\bf n}}
\def\bfp{{\bf p}}
\def\bfq{{\bf q}}
\def\bfr{{\bf r}}
\def\bfu{{\bf u}}
\def\bfv{{\bf v}}
\def\bfw{{\bf w}}
\def\bfx{{\bf x}}
\def\bfy{{\bf y}}
\def\scrI{{\mathscr{I}}}
\def\scrU{{\mathscr{U}}}
\def\scrV{{\mathscr{V}}}
\def\sfH{{\sf H}}
\def\bfmath#1{{\mathchoice{\mbox{\boldmath$#1$}}%
{\mbox{\boldmath$#1$}}%
{\mbox{\boldmath$\scriptstyle#1$}}%
{\mbox{\boldmath$\scriptscriptstyle#1$}}}}
\def\bfmY{\bfmath{Y}}
\def\bfmhhaY{\bfmath{\hhaY}} 
\def\bfmhhaY{\hbox to 0pt{$\widehat{\bfmY}$\hss}\widehat{\phantom{\raise 1.25pt\hbox{$\bfmY$}}}}
\def\til={{\widetilde =}}
\def\clB{{\cal B}}
\def\clC{{\cal C}}
\def\clD{{\cal D}}
\def\clG{{\cal G}}
\def\clI{{\cal I}}
\def\clN{{\cal N}}
\def\clP{{\cal P}}
\def\clR{{\cal R}}
\def\clS{{\cal S}}
\def\clT{{\cal T}}
\def\clW{{\cal W}}
\def\clX{{\cal X}}
 \def\FRAC#1#2#3{\genfrac{}{}{}{#1}{#2}{#3}}
\def\ddtp{{\mathchoice{\FRAC{1}{d^{\hbox to 2pt{\rm\tiny +\hss}}}{dt}}%
{\FRAC{1}{d^{\hbox to 2pt{\rm\tiny +\hss}}}{dt}}%
{\FRAC{3}{d^{\hbox to 2pt{\rm\tiny +\hss}}}{dt}}%
{\FRAC{3}{d^{\hbox to 2pt{\rm\tiny +\hss}}}{dt}}}}
\def\average#1,#2,{{1\over #2} \sum_{#1}^{#2}}
\def\eye(#1){{\bf(#1)}\quad}
\newtheorem{definition}{{\bf Definition}}
\newtheorem{proposition}{{\bf Proposition}}
\newtheorem{remark}{{\bf Remark}}
\def\eq#1/{(\ref{e:#1})}
\newcommand{\inp}[2]{{\langle #1, #2 \rangle}}
\newcommand{\inpr}[2]{{\langle #1, #2 \rangle}_\bR}
\newcommand{\beqn}[1]{\notes{#1}%
\begin{eqnarray} \elabel{#1}}
\newcommand{\eeqn}{\end{eqnarray} }
\newcommand{\beq}[1]{\notes{#1}%
\begin{equation}\elabel{#1}}
\newcommand{\eeq}{\end{equation}}
\def\bdes{\begin{description}}
\def\edes{\end{description}}
\newcounter{rmnum}
\newcounter{anum}
\def\ass(#1:#2){(#1\ref{#1:#2})}
\def\ritem#1{
\item[{\sf \ass(\current_model:#1)}]
}
\newenvironment{recall-ass}[1]{%
\begin{description}
\def\current_model{#1}}{
\end{description}
}
\long\def\comment#1{}
\newfont{\bbb}{msbm10 scaled 700}
\newfont{\bb}{msbm10 scaled 1100}
\newcommand{\xv}{{\bf x}}
\newcommand{\Id}{{\bf I}}
\newcommand{\Mm}{{\bf M}}
\newcommand{\Xm}{{\bf X}}
\newcommand{\Gammam}{\boldsymbol{\Gamma}}
\newcommand{\Lambdam}{\boldsymbol{\Lambda}}
\newcommand{\Deltam}{\boldsymbol{\Delta}}
\newcommand{\Sigmam}{\boldsymbol{\Sigma}}
\renewcommand{\Re}{{\rm Re}}
\newcommand{\transp}{{\sf T}}
\pgfplotsset{compat=newest}
\def\herm{{\sfH}}
\def\prox{{\mathsf{prox}}}
\def\cg{{\clC\clN}} 
\def\matlab{{MATLAB\textcopyright\,}}
\begin{document}

\title{Low-Complexity Massive MIMO Subspace Estimation and Tracking from Low-Dimensional Projections}
\author{Saeid Haghighatshoar,  \IEEEmembership{Member, IEEE,} Giuseppe Caire,
\IEEEmembership{Fellow, IEEE}%
\thanks{The authors are with the Communications and Information Theory Group, Technische Universit\"{a}t Berlin (\{saeid.haghighatshoar, caire\}@tu-berlin.de).}
\thanks{A shorter version of this paper was presented in the IEEE International Conference on Communications (ICC), Paris, France, May 2017.}

}

\maketitle

\begin{abstract}
Massive MIMO is a variant of multiuser MIMO,
in which the number of antennas $M$ at the base-station is very
large, and generally much larger than the number of spatially
multiplexed data streams to/from the users. It has been observed that in many realistic propagation scenarios  as well as in spatially correlated channel models used in standardizations, although the user channel vectors have a very high-dim $M$, they lie on low-dim subspaces due to their limited angular spread (spatial correlation). 
This low-dim subspace structure remains stable across many coherence blocks and can be exploited in several ways to improve the system performance. A main challenge, however, is to estimate this signal subspace from samples of users' channel vectors as fast and efficiently as possible.
In a recent work, we addressed this problem and proposed a very effective novel algorithm referred to as Approximate Maximum-Likelihood (AML), which was formulated as a semi-definite program (SDP). In this paper, we address two problems left open in our previous work, namely, computational complexity and tracking. 
The algorithm proposed in this paper is reminiscent of Multiple Measurement Vectors (MMV) problem in Compressed Sensing and is proved to be equivalent to the AML Algorithm for sufficiently dense angular grids. 
It has also a very low computational complexity and is able to track sharp transitions in the channel statistics very quickly. Although mainly motivated by massive MIMO applications, our proposed algorithm is of independent interest in other related subspace estimation applications. We assess the estimation/tracking performance of our proposed algorithm empirically via  numerical simulations, especially in  practically relevant situations where a direct implementation of the SDP would be infeasible in real-time. We also compare the performance of our algorithm with other related subspace estimation/tracking algorithms in the literature. 
\end{abstract}

\section{Introduction}  \label{sec:intro}

\PARstart{C}{onsider} a multiuser massive MIMO system formed by a base-station (BS) with $M$ antennas serving $K$ single-antenna mobile 
users in a cellular system. 
Following the current {\em massive MIMO} approach 
\cite{Marzetta-TWC10,Huh11,hoydis2013massive,larsson2014massive}, we focus on 
uplink (UL) and downlink (DL) in Time Division Duplexing (TDD), where the base-station (BS) transmit/receive hardware 
is designed or calibrated in order to preserve UL-DL reciprocity \cite{shepard2012argos,rogalin2014scalable} 
such that the BS estimates the channel vectors of the users from UL orthogonal training pilots 
sent by the users and uses them to transmit data to the users in the DL via coherent beamforming.  Since there is no multiuser interference in the UL training phase (after neglecting the pilot contamination), 
in this paper we focus on the basic channel estimation problem for a single user. 

In massive MIMO systems, the number of antennas $M$, thus, the dimension of the received signal at the BS is very large. However, in many relevant scenarios, channel vectors of each user are spatially  correlated since the propagation between the user and the BS occurs through a small 
set of \textit{Angle of Arrivals} (AoAs). Denoting by $\bfh(t) \in \bC^M$ the channel vector of a generic user at a time slot $t$, this implies that the signal covariance matrix of $\bfh(t)$ given by $\bfS:=\bE[\bfh(t) \bfh(t)^\herm]$ is typically low-rank. 
This  low-rank structure can be exploited to improve the system multiplexing gain and decrease 
the training overhead.  A particularly effective scheme is the Joint Spatial Division and Multiplexing (JSDM) approach proposed and analyzed in
\cite{adhikary2013joint,nam2014joint,adhikary2014joint}, where the users are partitioned into $G > 1$ groups such that users in each group have similar channel subspaces 
\cite{adhikary2013joint,nam2014joint,adhikary2014joint}. These groups are separated by a zero-forcing beamforming that uses only the group subspace information and reduces the dimensionality for each group $g$ to some $m_g \ll M$.
Then, additional multiuser multiplexing gain in each group $g$ is obtained by applying the conventional linear precoding to the lower-dim projected channel. 
This has the additional non-trivial advantage that only $m \ll M$ RF chains (A/D converters and modulators) are needed, thus, reducing the A/D conversion rate significantly.
This and many other related examples evidently illustrate that estimation of the signal subspace of the users plays a crucial rule in massive MIMO systems. However, obtaining the signal subspace information in massive MIMO is a high-dim estimation problem and becomes quite challenging as the number of antennas $M$ increases, especially that, due to the limited number of available RF chains at the receiver front end, the subspace estimation needs to be done with only low-dim projections of the received signal.

\subsection{Contribution}
In our recent work \cite{haghighatshoar2016channel, haghighatshoar2016massive}, we studied this problem and developed a new family of efficient algorithms for subspace estimation in massive MIMO. We also demonstrated via numerical simulations that our proposed algorithms provide near-ideal performance for a massive MIMO JSDM system. However, the low-complexity implementation of our proposed algorithms was left open in \cite{haghighatshoar2016channel, haghighatshoar2016massive}.
In this paper, we bridge the complexity gap by providing efficient and low-complexity implementation of the algorithms in our previous work \cite{haghighatshoar2016channel, haghighatshoar2016massive}, with a special focus on the AML (approximate maximum likelihood) Algorithm. Our approach is based on approximating the typically high-complexity semi-definite program (SDP) proposed for the original form of AML Algorithm in \cite{haghighatshoar2016channel, haghighatshoar2016massive} with another convex optimization problem that can be efficiently solved. We consider a generalization of the originally proposed AML Algorithm where the projection (sampling) operator may be time-variant, i.e., changing in different training slots. This results in further improvement in the subspace estimation. We extend our  proposed low-complexity algorithm to more practical array configurations such as 2D rectangular lattice arrays, and provide guidelines for efficient numerical implementation for general array configurations. We also illustrate that our algorithm can be run in a tracking mode, where the subspace estimate is updated upon arrival of a new training sample.

\subsection{Related Work}
Subspace estimation and tracking arises in a variety of problems in signal processing. 
In general, whenever a high-dim signal is generated by a linear process that is governed by a small number of parameters, it can be represented by a low-dim structure embedded in a  higher-dim space. This occurs in a wide range of applications such as  \textit{Direction-of-Arrival} (DoA) estimation \cite{kumaresan1983estimating, roy1989esprit, schmidt1986multiple}, source localization \cite{shahbazpanahi2001distributed}, anomaly detection \cite{stein2002anomaly}, adaptive filtering \cite{sayed2003fundamentals}, and wireless communication \cite{tse2005fundamentals}. The dominant signal subspace is obtained by computing the covariance matrix of the data (i.e., its second order statistics), e.g., the channel covariance matrix $\bfS=\bE[\bfh(t) \bfh(t)^\herm]$ in massive MIMO applications, and calculating its \textit{Singular Value Decomposition} (SVD), and is classically known as \textit{Principal Component Analysis} (PCA) in statistics  \cite{jolliffe1986principal} and \textit{Karhunen-Lo\`eve Transform} (KLT) in stochastic processes \cite{papoulis2002probability}. 
In practice, however, the data covariance matrix is not a priori known and should be estimated from the observed data samples. In particular, due to complexity reasons, this needs to be done by taking as few data samples and by consuming as less storage as possible. This has motivated a vast line of research on developing efficient and low-complexity subspace estimation/tracking algorithms \cite{comon1990tracking, yang1995projection, crammer2006online, balzano2010online, chi2013petrels}. 
Recently by the advent of Compressed Sensing \cite{donoho2006compressed, candes2006near}, there has been a new serge of interest in exploiting \textit{low-dim signal structures} such as sparsity and low-rankness, which has revitalized the popularity of subspace techniques in a variety of problems including Matrix Completion \cite{candes2009exact, candes2010power}, Super-Resolution \cite{candes2014towards}, compressive spectral estimation  \cite{baraniuk2007compressive, herman2009high}, and sparse signal reconstruction \cite{bajwa2010compressed, tropp2006algorithms, lee2012subspace, tropp2006algorithms2, malioutov2005sparse}

We compare the performance of our proposed algorithm with the \textit{Singular Value Thresholding} (SVT) Algorithm \cite{cai2010singular} in the \textit{batch mode} and with PETRELS Algorithm \cite{chi2013petrels} in the \textit{online tracking mode}. Both algorithms provide the state of the art performance in subspace estimation (SVT) and tracking (PETRELS). Our numerical simulations in Section \ref{sec:sim} illustrate that our algorithm performs better that SVT in the batch mode and is able to track  sharp transitions in signal statistics much faster than PETRELS in the online mode.

\subsection{Notation}
We show vectors by boldface small letters (e.g., $\xv$), matrices by boldface capital letters (e.g., $\Xm$), scalar constant by 
non-boldface letters (e.g., $x$ or $X$), and sets by calligraphic letters (e.g., $\clX$).
The $i$-th element of a vector $\xv$
and the $(i,j)$-th element of a matrix $\Xm$ is denoted by $[\xv]_i$ and $[\Xm]_{i,j}$. 
We represent the $i$-th row and $j$-th column of a matrix $\bfX$ with a row vector $\Xm_{i,.}$ and a column vector $\Xm_{.,j}$.
We denote the Hermitian and the transpose of a matrix (or a vector) $\bfX$ by $\bfX^\herm$ and $\Xm^\transp$.
We use $\tr(.)$ for the trace operator.
We denote the complex/real inner product of two matrices (or two vectors) $\bfX$ and $\bfY$ by $\inp{\bfX}{\bfY}=\tr(\bfX^\herm \bfY)$, and $\inpr{\bfX}{\bfY}=\Re[\inp{\bfX}{\bfY}]$. We use $\|\bfx\|$ for the $l_2$-norm of a vector $\bfx$, and $\|\bfX\|=\inp{\bfX}{\bfX}=\inpr{\bfX}{\bfX}$ for the Frobenius norm of a matrix $\bfX$.
We denote a $k\times k$ diagonal matrix with diagonal elements $s_1, \dots, s_k$ with $\diag(s_1, \dots, s_k)$.
We indicate the output of any optimization problem such as $\argmin_{\bfX \in \clX} f(\bfX)$ with $\bfX^*$.
The identity matrix of order $p$ is denoted by $\Id_p$. For an integer $k$, we use the shorthand notation $[k]$ for $\{1, \dots, k\}$. We denote the big-O notation by $O(.)$.

\section{Basic Setup}
\subsection{Array and Signal Model}
 Consider a BS with a large \textit{Uniform Linear Array} (ULA) with $M \gg 1$ antennas. 
The geometry of the array is shown in Fig.\,\ref{fig:sc_channel}, with antenna elements having a uniform spacing $d$. 
\begin{figure}[t]
\centering
\includegraphics{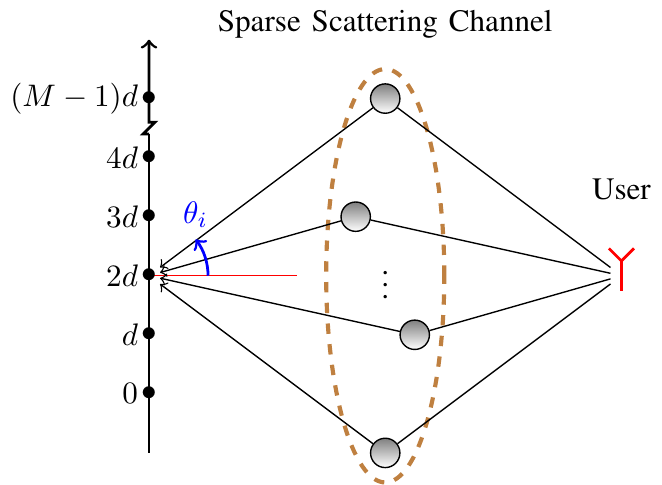}
\caption{{\small 
Array configuration in a multi-antenna receiver in the presence of a sparse scattering channel with only few scatterers with discrete angle of arrivals.}}
\label{fig:sc_channel}
\end{figure}
We assume that the BS serves the users in the angular range $[-\theta_{\max}, \theta_{\max}]$ for some $\theta_{\max} \in (0,\pi/2)$, and  set 
$d=\frac{\lambda}{2 \sin(\theta_{\max})}$, where $\lambda=\frac{c_0}{f_c}$ denotes the wave-length,  where $f_c$ is the carrier frequency and $c_0$ is speed of the light.
We consider a simple propagation model  
in which the transmission between a user and the BS occurs through 
$p$ scatterers (see Fig.\,\ref{fig:sc_channel}). The results can be simply extended to a general scattering model with a general mixed-type (continuous and discrete) power distribution over the AoA domain as in  \cite{haghighatshoar2016massive}. One snapshot of the received signal in a window of training pilots of size $T$ is given by
\begin{equation} \label{ziofafa}
\bfy(t)=\bfh(t) z(t)+\bfn(t):=\sum_{l=1}^p \bfa(\theta_l) w_l(t)\,  z(t) + \bfn(t),
\end{equation}
where $\bfh(t):=\sum_{l=1}^p \bfa(\theta_l) w_l(t)\in \bC^M$ denotes the channel vector of the user\footnote{In this paper, $\bfh(t)$ denotes the channel vector of a generic user at a specific subcarrier over an OFDM symbol transmitted at a time slot $t\in [T]$ (a specific time-frequency tile). Although the instantaneous channel vectors, which are used to transmit/receive data to/from the users in the DL/UL via coherent beamforming, might be highly correlated, we always assume that the channel vectors used for subspace estimation are sufficiently separated in time such that $\bfh(t)$, $t \in [T]$ are independent, where they are also identically distributed (i.i.d.) due to channel stationarity in time. Even in a fixed time slot $t\in[T]$,  one can obtain i.i.d. realizations of the channel vector by sampling at sufficiently separated subcarriers in the frequency domain since the channel  is also stationary in frequency with a frequency-invariant second order statistics. Thus, having multiple subcarriers in the frequency domain has the same effect as taking more observations in the time domain.}, $z(t)\in \bC$ is the transmitted pilot  symbol of the user, which typically belongs to a signal constellation such as QAM, $w_l(t)\sim \cg(0, \sigma_l^2)$ is the channel gain 
of the $l$-th multipath component, $\bfn(t) \sim \cg({\bf 0}, \sigma^2\Id_M)$ is the additive white Gaussian noise of the  antenna elements, and where $\bfa(\theta) \in \bC^M$ is the array response at AoA $\theta$, whose $k$-th component is given by 
\begin{align}
[\bfa(\theta)]_k
= e^{j k\frac{2\pi d \sin(\theta)}{\lambda}}
=e^{j k\pi \frac{\sin(\theta)}{\sin(\theta_{\max})}}.
\end{align}
According to the well-known and widely-accepted \textit{Wide Sense Stationary Uncorrelated Scattering} (WSSUS) model, the channel gains of different paths, i.e., $\{w_l(t)\}_{l=1}^p$, at every time $t \in [T]$,
are uncorrelated \cite{tse2005fundamentals}. 
Without loss of generality, we assume that the pilot symbol $z(t)$ transmitted along the channel vector $\bfh(t)$ is normalized to $z(t)=1$ in all training snapshots, thus, letting 
$\bfA = [\bfa(\theta_1), \dots,  \bfa(\theta_p)]$, we have 
\begin{align}\label{eq:disc_ch_mod}
\bfy(t)=\bfh(t)+\bfn(t)=\bfA \bfw(t) + \bfn(t), \;\;\; t\in[T],
\end{align}
where $\bfw(t) = (w_1(t), \dots, w_p(t))^\transp$ for different $t\in [T]$ are statistically independent. 
Also,  we assume that the AoAs $\{\theta_l\}_{l=1}^p$ remain invariant over the whole training period of length $T$ slots. From \eqref{eq:disc_ch_mod}, the covariance of $\bfy(t)$ is given by
\begin{align}\label{eq:subs_embed}
{\bfC}_y=\bfA \mathbf{\Sigma} \bfA^\herm + \sigma^2\Id_M= \sum_{l=1}^p \sigma_l^2 \bfa(\theta_l) \bfa(\theta_l)^\herm + \sigma^2\Id_M,
\end{align}
where $\Sigmam=\diag(\sigma_1^2, \dots, \sigma_p^2)$ is the covariance matrix of $\bfw(t)$, and where $\bfS=\bE[\bfh(t) \bfh(t)^\herm]=\sum_{l=1}^p \sigma_l^2 \bfa(\theta_l) \bfa(\theta_l)^\herm$ is the covariance matrix of the channel vectors. It is not difficult to check that $\bfS$ is a Hermitian \textit{positive semi-definite} (PSD) Toeplitz matrix of rank $p$, where typically $p \ll M$. In practice, the channel vectors are formed by the superposition of AoA contributions weighted according to a mixed-type measure $\gamma(d\theta)$ as in \cite{haghighatshoar2016massive}, containing both discrete masses in correspondence of specular reflectors and a continuous measure corresponding to scattering clusters. In this case, $\bfS$ is given by $\int \gamma(d\theta) \bfa(\theta)^\herm \bfa(\theta)$ and is generally full rank (algebraic rank). However, since $\gamma$ has a limited angular support in practice, $\bfS$ can be well approximated by a  low-rank matrix (effective rank), thus, the low-rank assumption is still valid. The AML Algorithm in \cite{haghighatshoar2016channel, haghighatshoar2016massive} and also our proposed low-complexity scheme in this paper apply to this general case.

\subsection{Sampling Operator}\label{sec:samp_op}
As explained in the introduction, in massive MIMO systems, it is crucial to be able to recover the signal subspace of the users from low-dim projections of their received channel vectors. 
In general, low-dimensional projections can be obtained via a  $m\times M$ matrix $\bfB$ for some $m\ll M$, which can be implemented as part of the analog receiver front-end. A particularly simple and attractive choice 
is ``antenna selection'', where $\bfB$ is a binary 0-1 selection matrix with a single element equal to $1$ in each row. 
In this paper, we always consider such an antenna selection scheme as the projection operator, where in each training slot, the BS samples the output signal of only $m \ll M$ random antenna elements via $m$ available RF chains. Also, we consider a general case in which the antenna selection can be time-variant. Letting $\clI_t=\{i_1(t), \dots,  i_m(t)\} \subseteq [M]$ be the indices of $m$ randomly selected antenna elements at time $t \in [T]$, we denote the $m\times M$ projection matrix by $\bfB(t)$, where the single $1$ in each row is given by $[\bfB(t)]_{k,i_k(t)}=1$, for $k\in [m]$. Note that for such a matrix $\bfB(t)$, we have that $\bfB(t) \bfB(t)^\herm=\bfI_m$. We define the noisy projection (sketch) at time $t\in [T]$ by 
$\bfx(t):=\bfB(t) \bfy(t)=\bfB(t) \bfh(t) + \widetilde{\bfn}(t)$, 
where $\bfy(t)$ is given by \eqref{eq:disc_ch_mod} and where $\widetilde{\bfn}(t):=\bfB(t) \bfn(t) \sim \cg(0, \sigma^2 \bfI_m)$ denotes the projected noise vector.

\subsection{Performance Metric}\label{sec:perf_metric}
Our goal is to find an estimate of the dominant signal subspace of the covariance matrix $\bfS$ of the channel vectors $\bfh(t)$, $t\in [T]$. Let $\widehat{\bfS}$ be such an estimate and let $\bfS=\bfU \Lambdam \bfU^\herm$ and $\widehat{\bfS}=\widehat{\bfU} \widehat{\Lambdam} \widehat{\bfU}^\herm$ denote the SVD of $\bfS$ and $\widehat{\bfS}$. We always use the convention that the singular values (e.g., those in $\Lambdam$ and $\widehat{\Lambdam}$) are sorted in a non-increasing order. We define the normalized power distribution for $\bfS$ by a vector $\bfp \in \bR_+^M$, where $[\bfp]_i=\frac{\lambda_i}{\sum_{j=1}^M \lambda_j}$ with  $\lambda_k$  denoting the $k$-th singular value of $\bfS$. Let $\widehat{\bfU}=[\widehat{\bfu}_1, \dots, \widehat{\bfu}_M]$ where $\widehat{\bfu}_i$ denotes the $i$-th column of $\widehat{\bfU}$. We denote the power captured by columns of $\widehat{\bfU}$ by a vector $\bfq \in \bR_+^M$, where $[\bfq]_i:=\inp{\bfS}{\widehat{\bfu}_i \widehat{\bfu}_i^\herm}=\bE[|\widehat{\bfu}_i^\herm \bfh(t)|^2]$ gives the amount of power of $\bfS$ captured by the $1$-dim (rank-1) projection operator $\widehat{\bfu}_i \widehat{\bfu}_i^\herm$. Note that $\sum_{i=1}^M [\bfq]_i= \inp{\bfS}{\widehat{\bfU} \widehat{\bfU}^\herm}=\tr(\bfS)=\bE[\|\bfh(t)\|^2]$, which gives the whole power contained in $\bfS$. We normalize $\bfq$ and define the estimated normalized power distribution $\widehat{\bfp}\in \bR_+^M$, where $[\widehat{\bfp}]_i=\frac{[\bfq]_i}{\sum_{j=1}^M [\bfq]_j}$. Let $\eta_\bfp(k):=\sum_{i=1}^k [\bfp]_i$ and $\eta_{\widehat{\bfp}}(k)=\sum_{i=1}^k [\widehat{\bfp}]_i$, for $k\in[M]$, denote the whole normalized signal power contained in the first $k$ component of $\bfp$ and $\widehat{\bfp}$. Note that since $\bfU$ is the SVD basis for $\bfS$, we always have $\eta_{\bfp}(k) \geq \eta_{\widehat{\bfp}}(k)$, for every $k \in [M]$, which implies that the vector $\widehat{\bfp}$ is always majorized by ${\bfp}$ \cite{marshall1979inequalities}. Also, due to the normalization, we have $\eta_{\bfp}(M)=\eta_{\widehat{\bfp}}(M)=1$.

In subspace estimation applications in massive MIMO, e.g., in JSDM, the goal is to design for each user a low-dim beamformer that captures a significant amount of the power of its channel vectors\footnote{Namely, an $M\times q$ matrix $\bfV$, for some $q\ll M$, satisfying $\bfV^\herm \bfV=\bfI_q$, and $\bE\big[\|\bfV^\herm \bfh(t)\|^2\big] \geq (1-\epsilon) \bE\big[\|\bfh(t)\|^2\big]$, for some small $\epsilon \in (0,1)$.}. An appropriate distortion measure for such applications is $\nu(\bfp, \widehat{\bfp}):=\max_{k \in [M]} \frac{\eta_{{\bfp}}(k) - \eta_{\widehat{\bfp}} (k)}{\eta_{{\bfp}}(k)}$, which captures the maximum ratio of power loss incurred by beamforming to the dominant $k$-dim subspace of the estimate $\widehat{\bfS}$ rather than the optimal $k$-dim subspace of $\bfS$, for any arbitrary $k \in [M]$. We will use $\Gamma(\bfp, \widehat{\bfp}):=1-\nu(\bfp, \widehat{\bfp})$ as the metric for assessing the performance of the subspace estimation. Note that $\Gamma(\bfp, \widehat{\bfp})\in [0,1]$, where  $\Gamma(\bfp, \widehat{\bfp})=1$ if and only if $\bfS=\mu \widehat{\bfS}$ for some $\mu >0$. In particular, if $\Gamma(\bfp, \widehat{\bfp})\geq 1-\epsilon$, for some fixed $\epsilon \in (0,1)$, then the $M\times k$ beamforming matrix $[\widehat{\bfu}_1, \dots, \widehat{\bfu}_k]$ obtained from the estimate $\widehat{\bfS}$ for an arbitrary $k \in [M]$ is at least $(1-\epsilon)$-optimal with respect to the best $k$-dim subspace of true signal covariance  matrix $\bfS$.

\section{Problem Statement}
In this section, we briefly explain the AML Algorithm proposed for subspace estimation in \cite{haghighatshoar2016channel, haghighatshoar2016massive}. For simplicity of explanation, we assume that the 0-1 sampling operator is a fixed $m\times M$ matrix $\bfB$ for all $t \in [T]$. We will later consider the generalized time-variant sampling operator $\bfB(t)$. 

Let $\bfy(t)=\bfh(t)+\bfn(t)$ be the noisy user channel vector received at the array at time $t \in [T]$, and let $\bfx(t)=\bfB \bfy(t)$ be its $m$-dim projection via $\bfB$. We assume that the noise variance $\sigma^2$ is known and multiply all the signals by $\frac{1}{\sigma}$ to normalize the noise power to $1$. For simplicity, we still use the same notation for the normalized signals. Let $\widehat{\bfC}_x=\frac{1}{T}\sum_{t \in [T]} \bfx(t) \bfx(t)^\herm$ be the sample covariance matrix of the sketches $\bfx(t)$, $t \in [T]$, let $\widehat{\bfC}_x=\bfV \bfD \bfV^\herm$ be its SVD, and define $\Deltam:=\widehat{\bfC}^{1/2}_x=\bfV \bfD^{1/2}$. The AML Algorithm in \cite{haghighatshoar2016channel, haghighatshoar2016massive} is cast as the following SDP:
\begin{align}\label{eq:aml}
({\bfS}^*,&\bfK^*) = \argmin_{\Mm \in \clT_+, \bfK\in \bC^{m\times m}} \tr(\bfB \Mm \bfB^\herm) + \tr(\bfK)\nonumber\\
& \text{ subject to } \left [  \begin{array}{cc}  \bfI_m+ \bfB \Mm \bfB^\herm & \mathbf{\Delta} \\ \mathbf{\Delta}^\herm & \bfK \end{array} \right ] \succeq {\bf 0},
\end{align}
where $\clT_+$ denotes the space of all $M\times M$ Hermitian PSD Toeplitz matrices. The optimal solution $\bfS^*$ of \eqref{eq:aml} gives an estimate of the covariance matrix $\bfS$ of the channel vectors. 

In \cite{haghighatshoar2016massive}, we illustrated via numerical simulations that AML Algorithm has an excellent performance for estimating user signal subspaces, especially in a JSDM system setup. Unfortunately, the SDP \eqref{eq:aml} proposed for AML Algorithm in \cite{haghighatshoar2016channel, haghighatshoar2016massive} is quite time-consuming, especially for a large array size $M$. In this paper, instead of directly solving the SDP \eqref{eq:aml}, we approximate it by another convex optimization problem for which we provide an efficient and low-complexity algorithm. Our algorithm works for the more general setup in which the $m \times M$ sampling matrices $\bfB(t)$ may be time-variant, and can be applied to more practical array configurations such as 2D rectangular arrays. 

\section{Mathematical Formulation}
\subsection{Equivalent Convex Optimization}\label{sec:eq_conv}
Let $\clG$ be a discrete grid of size $G$ consisting of the angles $\theta_i:=\sin^{-1}\big((-1+\frac{2(i-1)}{G}) \sin(\theta_{\max})\big)$, for $i\in [G]$,  over the angular range $[-\theta_{\max}, \theta_{\max}]$.  Let $\bfG=[\bfa(\theta_1), \dots, \bfa(\theta_G)]$ be an $M \times G$ matrix consisting of array responses at AoAs $\theta_i \in \clG$, $i \in [G]$. We assume that $\clG$ is dense enough such that every signal covariance matrix $\bfS$ can be well approximated by
\begin{align}\label{eq:denseness}
\bfS\approx \bfG\, \diag(s_1, \dots, s_G) \bfG^\herm=\sum_{i=1}^G s_i \bfa(\theta_i)\bfa(\theta_i)^\herm,
\end{align}
with appropriate $s_i \geq 0$, $i \in [G]$\footnote{All the results in this paper remain valid for other grids--other than $\clG$--as far as they are sufficiently dense. For $\clG$, the matrix $\bfG$ becomes an oversampled Fourier matrix, which provides the additional advantage of reducing the computational complexity as we explain in Section \ref{sec:comp_complexity}.}.  For a ULA of size $M$, taking $G\approx 2M$ is typically sufficient for this approximation to hold. 
Also, note that any $M\times M$ matrix of the form \eqref{eq:denseness} is a valid channel covariance matrix since it corresponds to the covariance matrix of the channel vector $\bfh(t)=\sum_{i=1}^G w_i(t) \bfa(\theta_i)$ consisting of $G$ scatterers with channel gains $w_i(t)\sim\cg(0, s_i)$ located at AoA $\theta_i$.

Now, consider the following convex optimization problem to be solved for the $G\times T$ matrix $\bfW$:
\begin{align}\label{eq:l2_1}
\bfW^*=\argmin_{\bfW}\frac{1}{2}\|\check{\bfG} \bfW - \bfX\|^2 + \sqrt{T}  \|\bfW\|_{2,1},
\end{align}
where $\|\bfW\|_{2,1}$ denotes the $l_{2,1}$-norm of $\bfW$ defined by $\|\bfW\|_{2,1}:=\sum_{i=1}^G \|\bfW_{i,.}\|$, where $\bfX=[\bfx(1), \dots, \bfx(T)]$ is the $m\times T$ matrix of  noisy sketches, and where $\check{\bfG}=\frac{1}{\sqrt{m}} \bfB \bfG$ is an $m\times G$ matrix with columns of unit $l_2$-norm. We prove the following result.
\begin{proposition}\label{convex_eq}
Assume that the grid $\clG$ is dense enough such that every covariance matrix $\bfS$ can be well approximated according to \eqref{eq:denseness}. Then, the SDP \eqref{eq:aml} and the convex optimization \eqref{eq:l2_1} are equivalent, in the sense that if  $\bfW^*$ is the minimizer of \eqref{eq:l2_1}, then the optimal solution of \eqref{eq:aml} can be approximated by $\bfS^*={\bfG} \,\diag(s^*_1, \dots, s^*_G) {\bfG}^\herm$, where $s^*_i=\frac{\|\bfW^*_{i,.}\|}{m\sqrt{T}}$. \hfill $\square$
\end{proposition}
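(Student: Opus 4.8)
The plan is to reduce both problems to the same scalar optimization over a nonnegative weight vector $\bfs=(s_1,\dots,s_G)$, and then simply read off the claimed correspondence. First I would invoke the denseness assumption \eqref{eq:denseness} to replace the Toeplitz cone $\clT_+$ by its grid parametrization: every feasible $\Mm\in\clT_+$ is written (exactly for grid-representable covariances, and approximately in general) as $\Mm=\bfG\diag(s_1,\dots,s_G)\bfG^\herm$ with $s_i\ge0$. Since the columns of $\check{\bfG}=\frac{1}{\sqrt m}\bfB\bfG$ have unit norm and $\bfB\bfa(\theta_i)=\sqrt m\,\check{\bfG}_{.,i}$, this yields $\bfB\Mm\bfB^\herm=m\,\check{\bfG}\diag(\bfs)\check{\bfG}^\herm$ and $\tr(\bfB\Mm\bfB^\herm)=m\sum_i s_i$. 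Next I eliminate the slack matrix $\bfK$: because $\bfI_m+\bfB\Mm\bfB^\herm\succ 0$, the Schur complement of the LMI block is equivalent to $\bfK\succeq\Deltam^\herm(\bfI_m+\bfB\Mm\bfB^\herm)^{-1}\Deltam$, and minimizing $\tr(\bfK)$ forces equality. Using $\Deltam\Deltam^\herm=\widehat{\bfC}_x=\frac{1}{T}\bfX\bfX^\herm$, the SDP \eqref{eq:aml} collapses to
\[
\min_{s_i\ge0}\; m\sum_{i=1}^G s_i+\frac{1}{T}\tr\!\Big((\bfI_m+m\,\check{\bfG}\diag(\bfs)\check{\bfG}^\herm)^{-1}\bfX\bfX^\herm\Big).
\]

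For the group-lasso \eqref{eq:l2_1} I would use the standard variational (reweighted-$\ell_2$) identity $\|\bfW\|_{2,1}=\min_{\gamma_i\ge0}\frac12\sum_i(\|\bfW_{i,.}\|^2/\gamma_i+\gamma_i)$, whose minimum is attained at $\gamma_i=\|\bfW_{i,.}\|$. Substituting this, the inner problem in $\bfW$ is a Tikhonov-regularized least squares with minimizer $\bfW=(\check{\bfG}^\herm\check{\bfG}+\sqrt T\,\Gammam^{-1})^{-1}\check{\bfG}^\herm\bfX$, where $\Gammam=\diag(\gammav)$. Applying the push-through identity $\sqrt T(\sqrt T\,\bfI_m+\check{\bfG}\Gammam\check{\bfG}^\herm)^{-1}=\bfI_m-\check{\bfG}(\check{\bfG}^\herm\check{\bfG}+\sqrt T\,\Gammam^{-1})^{-1}\check{\bfG}^\herm$ to the optimal value and factoring $\sqrt T$ out of the resolvent, \eqref{eq:l2_1} becomes
\[
\min_{\gamma_i\ge0}\;\frac12\tr\!\Big((\bfI_m+\tfrac{1}{\sqrt T}\check{\bfG}\Gammam\check{\bfG}^\herm)^{-1}\bfX\bfX^\herm\Big)+\frac{\sqrt T}{2}\sum_i\gamma_i.
\]

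The last step is to match the two reduced problems through the change of variables $\gamma_i=m\sqrt T\,s_i$. This turns $\frac{1}{\sqrt T}\check{\bfG}\Gammam\check{\bfG}^\herm$ into $m\,\check{\bfG}\diag(\bfs)\check{\bfG}^\herm$ and the penalty $\frac{\sqrt T}{2}\sum_i\gamma_i$ into $\frac{mT}{2}\sum_i s_i$, so the group-lasso objective equals exactly $\frac{T}{2}$ times the reduced SDP objective. Since a positive rescaling does not move the minimizer, the optimal $\bfs^*$ coincide; and because $\gamma_i^*=\|\bfW_{i,.}^*\|$ at the joint optimum, I obtain $s_i^*=\gamma_i^*/(m\sqrt T)=\|\bfW_{i,.}^*\|/(m\sqrt T)$, so that $\bfS^*=\bfG\diag(s_1^*,\dots,s_G^*)\bfG^\herm$ as claimed.

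The routine parts — Schur complement, ridge closed form, push-through identity — are mechanical; the only thing demanding care is the bookkeeping of constants, so that the weight $\sqrt T$ in \eqref{eq:l2_1} produces precisely the factor $m\sqrt T$ linking $\gamma_i$ and $s_i$. The genuine subtlety, and the reason the statement reads ``approximated,'' lies entirely in the first step: the identification $\clT_+=\{\bfG\diag(\bfs)\bfG^\herm:s_i\ge0\}$ holds only up to the grid discretization error controlled by \eqref{eq:denseness}, so the equivalence is exact for the grid-restricted SDP and only approximate for the original one. I would also flag the degenerate rows with $\|\bfW_{i,.}^*\|=0$, where $\gamma_i=\|\bfW_{i,.}\|$ must be read as the limit $\gamma_i\downarrow 0$ in the variational identity; these correspond to $s_i^*=0$ and cause no difficulty.
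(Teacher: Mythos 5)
Your proposal is correct and takes essentially the same approach as the paper's proof: both rest on the same ingredients (the variational representation of the $\ell_{2,1}$-norm, least-squares elimination of the inner variable via the matrix inversion lemma, Schur-complement elimination of $\bfK$, and the denseness-based grid parametrization together with the unit-norm-column observation for $\check{\bfG}$), and both pass through the identical reduced diagonal problem. The only differences are organizational---you reduce the two problems to a common midpoint and match constants via $\gamma_i = m\sqrt{T}\, s_i$, while the paper chains from \eqref{eq:l2_1} to \eqref{eq:aml}---and notational (your $\gamma_i$ plays the role of the paper's $|\gamma_i|^2$), neither of which changes the substance.
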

\begin{proof}
Proof in Appendix \ref{app:convex_eq}.
\end{proof}
Some remarks are in order here.
\begin{remark}
Optimization problems of the type \eqref{eq:l2_1} with an $l_{2,1}$-norm regularization of the form $\tau \|\bfW\|_{2,1}$, for some regularization factor $\tau>0$, are quite well-known and are widely applied to solve \textit{Multiple Measurement Vector} (MMV) problem in Compressed Sensing \cite{tropp2006algorithms2, malioutov2005sparse}, where multiple measurement vectors correspond to different realizations of a sparse vector all having the same sparsity pattern (location of nonzero coefficients). It is also well-known that $l_{2,1}$-norm regularization promotes the block sparsity or row sparsity of the optimal solution $\bfW^*$ of the channel coefficients in \eqref{eq:l2_1}. As each row $\bfW^*_{i,.}$ of $\bfW^*$ corresponds to the channel gain of a scatterer located at $\theta_i \in \clG$ across $T$ training slots, considering the sparse scattering channel in the angular domain, this seems to be quite a reasonable regularization. However, the main novelty in \eqref{eq:l2_1} consists of the remarkable fact that for the particular choice $\lambda = \sqrt{T}$ of the regularization coefficient and within the assumptions of Proposition \ref{convex_eq},  this particular instance of MMV is, asymptotically for sufficiently dense angular grids, equivalent to the AML Algorithm, which is derived in a completely different way without any assumption on grid quantization in the angular domain. \hfill $\lozenge$
\end{remark}

\begin{remark}\label{rem:G_large}
Proposition \ref{convex_eq} implies that by increasing the number of grid points $G$ the optimal solution $\bfS^*$ of SDP \eqref{eq:aml} can be better approximated as ${\bfG} \,\diag(s_1, \dots, s_G) {\bfG}^\herm$, with appropriate $s_i >0$. However, by increasing $G$ the columns of the matrix $\bfG$, containing array responses over the grid points, become more and more correlated. It is well known from classical Compressed Sensing \cite{donoho2006compressed, candes2006near} that in a sparse estimation problem such as \eqref{eq:l2_1}, the correlation among the columns of the sensing matrix typically degrades the performance of estimation of $\bfW^*$, e.g., it creates spurious rows in $\bfW^*$. It is remarkable that, as far as estimating the signal subspace $\bfS^*$ is concerned, increasing $G$ does not incur any degradation of the performance, thanks to the convergence of \eqref{eq:l2_1} to \eqref{eq:aml} proved in Proposition \ref{convex_eq}. \hfill $\lozenge$
\end{remark}

\subsection{Forward-Backward Splitting}
In this section, we derive our low-complexity algorithm for solving the optimization problem \eqref{eq:l2_1} using the well-known \textit{Forward-Backward Splitting} (FBS) for minimizing sum of two convex functions (see \cite{combettes2005signal, combettes2011proximal} and the refs. therein).
\begin{definition}\label{def:prox}
Let $g:\bC^k \to \bR$ be a convex function. The proximal operator of $g$ denoted by $\prox_g: \bC^k \to \bC^k$ is defined by $\prox_g(\bfx):=\argmin _{\bfy \in \bC^k} g(\bfy) + \frac{1}{2} \|\bfx -\bfy\|^2$. \hfill $\lozenge$
\end{definition}
Note that for any arbitrary convex function $g$ and a fixed $\bfx \in \bC^k$, the modified convex function $g(\bfy) + \frac{1}{2} \|\bfx -\bfy\|^2$ is strongly convex and has a unique minimizer, thus, $\prox_g(\bfx)$ is always well-defined (single-valued) for any arbitrary $\bfx \in \bC^k$.

Consider the objective function in \eqref{eq:l2_1}. After suitable scaling, we can write \eqref{eq:l2_1} as the minimization of the convex function $f(\bfW)=f_1(\bfW)+f_2(\bfW)$, where 
\begin{align}
f_1(\bfW):=\frac{1}{2\zeta}\|\check{\bfG} \bfW - \bfX\|^2,\ f_2(\bfW):= \|\bfW\|_{2,1},
\end{align}
where $\zeta=\sqrt{T}$. The gradient of $f_1$ is given by $\nabla f_1(\bfW)=\frac{1}{\zeta}\check{\bfG}^\herm (\check{\bfG} \bfW - \bfX)$. Notice that $\nabla f_1$ is a Lipschitz function with a Lipschitz constant $\beta$, i.e., 
\begin{align}
\|\nabla f_1(\bfW) - \nabla f_1(\bfW')\| \leq \beta \|\bfW - \bfW'\|,
\end{align}
with $\beta=\frac{1}{\zeta}\lambda_{\max} (\check{\bfG}^\herm \check{\bfG})=\frac{1}{\zeta}\lambda_{\max} (\check{\bfG} \check{\bfG}^\herm)$, where $\lambda_{\max}$ denotes the maximum singular value of a given matrix. Note that if the grid size $G$ is sufficiently large and the grid points are distributed approximately uniformly over the AoAs, we have that 
\begin{align*}
\check{\bfG}\check{\bfG}^\herm&=\frac{1}{m} \bfB \Big \{ \sum_{i=1}^G \bfa(\theta_i) \bfa(\theta_i)^\herm\Big \}\bfB^\herm\approx \frac{G}{m} \bfB \bfI_M \bfB^\herm= \frac{G}{m} \bfI_m,
\end{align*}
which implies that $\beta=\frac{G}{\zeta m}=\frac{G}{ m \sqrt{T}}$. Using the standard results, we obtain the following upper bound for $f_1(\bfW)$.
\begin{proposition}\label{sorrogate}
Let $\bfW'$ be a given point. Then, $f_1(\bfW)$ for every $\bfW$ can be upper bounded by $\check{f}_1(\bfW)$, where
\begin{align*}
\check{f}_1(\bfW)=f_1(\bfW') +\inpr{\nabla f_1(\bfW')}{\bfW-\bfW'} + \frac{\beta}{2} \|\bfW - \bfW'\|^2,
\end{align*}
where $\inpr{.}{.}$ denotes the real-valued inner product.  \hfill $\square$
\end{proposition}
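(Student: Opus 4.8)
The plan is to exploit the fact that $f_1$ is an \emph{exact} quadratic in $\bfW$, so that the claimed inequality $f_1(\bfW)\le\check f_1(\bfW)$ reduces to a single curvature comparison. First I would set $\bfH:=\bfW-\bfW'$ and expand the Frobenius norm,
\begin{align*}
\|\check{\bfG}\bfW-\bfX\|^2 &= \|\check{\bfG}\bfW'-\bfX\|^2 \\
&\quad + 2\Re\inp{\check{\bfG}\bfW'-\bfX}{\check{\bfG}\bfH} + \|\check{\bfG}\bfH\|^2 ,
\end{align*}
using the complex identity $\|\bfa+\bfb\|^2=\|\bfa\|^2+2\Re\inp{\bfa}{\bfb}+\|\bfb\|^2$. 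Dividing by $2\zeta$ shows that $f_1(\bfW)$ equals $f_1(\bfW')$ plus a term linear in $\bfH$ plus the nonnegative term $\tfrac{1}{2\zeta}\|\check{\bfG}\bfH\|^2$.

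The second step is to identify the cross term with the gradient term already present in $\check f_1$. Moving $\check{\bfG}$ across the inner product through its adjoint gives $\Re\inp{\check{\bfG}\bfW'-\bfX}{\check{\bfG}\bfH}=\Re\inp{\check{\bfG}^\herm(\check{\bfG}\bfW'-\bfX)}{\bfH}$, and dividing by $\zeta$ reproduces exactly $\inpr{\nabla f_1(\bfW')}{\bfW-\bfW'}$ by the stated formula $\nabla f_1(\bfW')=\tfrac{1}{\zeta}\check{\bfG}^\herm(\check{\bfG}\bfW'-\bfX)$ together with $\inpr{\cdot}{\cdot}=\Re\inp{\cdot}{\cdot}$. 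Hence $f_1$ and $\check f_1$ agree exactly in their constant and linear parts, and only the two quadratic terms remain to be compared.

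The last step is the curvature bound $\tfrac{1}{2\zeta}\|\check{\bfG}\bfH\|^2\le\tfrac{\beta}{2}\|\bfH\|^2$. Writing $\|\check{\bfG}\bfH\|^2=\tr(\bfH^\herm\check{\bfG}^\herm\check{\bfG}\bfH)=\sum_j \bfh_j^\herm\check{\bfG}^\herm\check{\bfG}\bfh_j$ as a sum over the columns $\bfh_j$ of $\bfH$, each summand satisfies the Rayleigh-quotient bound $\bfh_j^\herm\check{\bfG}^\herm\check{\bfG}\bfh_j\le\lambda_{\max}(\check{\bfG}^\herm\check{\bfG})\|\bfh_j\|^2$, since $\check{\bfG}^\herm\check{\bfG}\preceq\lambda_{\max}(\check{\bfG}^\herm\check{\bfG})\bfI_G$. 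Summing over $j$ and recalling $\beta=\tfrac{1}{\zeta}\lambda_{\max}(\check{\bfG}^\herm\check{\bfG})$ yields the bound, and inserting it into the expansion gives $f_1(\bfW)\le\check f_1(\bfW)$ for every $\bfW$.

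There is essentially no serious obstacle, since this is the standard descent lemma specialized to a quadratic with Lipschitz gradient; the only points requiring care are keeping the cross term real (the Hermitian structure forces the real inner product $\inpr{\cdot}{\cdot}$ to appear in $\check f_1$) and reducing the Frobenius trace for the matrix argument $\bfH$ to a column-wise sum so that the scalar spectral inequality applies directly.
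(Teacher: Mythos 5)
Your proof is correct, but it takes a genuinely different route from the paper's. The paper proves the bound as an instance of the general descent lemma: it parameterizes the segment $\Deltam(s)=\bfW'+s(\bfW-\bfW')$, writes $f_1(\bfW)-f_1(\bfW')=\int_0^1 \inpr{\nabla f_1(\Deltam(s))}{\bfW-\bfW'}\,ds$, and bounds the remainder via Cauchy--Schwarz together with the $\beta$-Lipschitz property of $\nabla f_1$ — an argument that never uses the specific form of $f_1$, only its Lipschitz gradient. You instead exploit the fact that $f_1$ is exactly quadratic: the expansion $f_1(\bfW)=f_1(\bfW')+\inpr{\nabla f_1(\bfW')}{\bfW-\bfW'}+\tfrac{1}{2\zeta}\|\check{\bfG}(\bfW-\bfW')\|^2$ is an identity, and the only inequality is the spectral bound $\tfrac{1}{2\zeta}\|\check{\bfG}\bfH\|^2\le\tfrac{\beta}{2}\|\bfH\|^2$ with $\beta=\tfrac{1}{\zeta}\lambda_{\max}(\check{\bfG}^\herm\check{\bfG})$, which you correctly reduce column-wise to a Rayleigh-quotient estimate. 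Your version is more elementary (no integral representation, no appeal to a Lipschitz hypothesis) and actually yields slightly more, namely the exact remainder $\tfrac{1}{2\zeta}\|\check{\bfG}\bfH\|^2$, which is tighter than $\tfrac{\beta}{2}\|\bfH\|^2$. What the paper's argument buys in exchange is generality: it applies verbatim to any $f_1$ with Lipschitz gradient, which is convenient when the paper later swaps in the time-varying-sampling objective $f_1(\bfW)=\tfrac{1}{2\zeta}\sum_t\|\check{\bfG}_t\bfW_{.,t}-\bfX_{.,t}\|^2$ and reuses the same proposition with only the Lipschitz constant recomputed, whereas your argument would need to be redone (though it adapts easily, since that objective is also quadratic).
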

\begin{proof}
Proof in Appendix \ref{app:sorrogate}.
\end{proof}
From Proposition \ref{sorrogate}, it follows that $\check{f}_1(\bfW)$ gives an upper bound on $f_1(\bfW)$ around a given point $\bfW'$, which is indeed tight at $\bfW'$. This implies that $f(\bfW)$ can be upper-bounded by $\check{f}(\bfW):=\check{f}_1(\bfW)+f_2(\bfW)$. 
Minimizing $\check{f}(\bfW)$ can be equivalently written as minimizing $f_2(\bfW)+ \frac{\beta}{2} \| \bfW-\bfW' +\frac{1}{\beta} \nabla f_1(\bfW')\|^2$, whose optimal solution is given by $\bfW''=\prox_{\frac{1}{\beta} f_2}(\bfW'- \frac{1}{\beta} \nabla f_1(\bfW'))$ in terms of the proximity operator of the $l_{2,1}$-norm $f_2(\bfW)=\|\bfW\|_{2,1}$ according to Definition \ref{def:prox}. Standard calculations show that for a given $\alpha>0$, $(\prox_{\alpha f_2}(\bfW))_{i,.}= \frac{(\|\bfW_{i,.}\|-\alpha)_+}{\|\bfW_{i,.}\|} \bfW_{i,.}$ is obtained by simply shrinking the rows of $\bfW$, where  $(x)_+:=\max(x,0)$.

With this explanation, we propose the following iterative algorithm based on FBS. We initialize $\bfW^{(0)}={\bf 0}$ and define for $k=1, 2, \dots$ the sequence $\bfW^{(k+1)}:= \prox_{\frac{1}{\beta} f_2} \big ( \bfW^{(k)} - \frac{1}{\beta} \nabla f_1(\bfW^{(k)}) \big )$. It is seen that the estimate $\bfW^{(k+1)}$ at iteration $k$ is obtained by lower-bounding the function $f_1(\bfW)$  in a neighborhood of $\bfW'=\bfW^{(k)}$ by $\check{f}_1(\bfW)$ according to Proposition \ref{sorrogate} and finding the optimal solution of the resulting function $\check{f}(\bfW)$.  A variable step-size variant of our proposed algorithm is given in Algorithm \ref{fb_alg}, where  
in each iteration, the functions $f_1$ is minimized by moving along $-\nabla f_1$ with a positive step-size $\alpha_k$ (forward step), followed by $f_2$ being minimized by applying the proximal operator $\prox_{\alpha_k f_2}$ (backward step). This approach is known as \text{operator splitting} since individual components of $f$, i.e., $f_1$ and $f_2$, are optimized sequentially rather than jointly. The advantage is that splitting reduces the computational complexity since most of the time computing the joint proximal operator $\prox_{f_1+f_2}$ is much more complicated than computing the individual one, e.g., $\prox_{f_2}$. 

\begin{algorithm}[t]
\caption{FBS for $l_{2,1}$-Minimization.}
\label{fb_alg}
\begin{algorithmic}[1]
\State {{\bf Initialization:}} Fix $\epsilon \in (0, \min\{1, \frac{1}{\beta}\})$, $\bfW^{(0)}$.
\For{$k=1,\dots,$}
\State $\alpha_k\in [\epsilon, 2/\beta-\epsilon]$
\State $\bfZ^{(k)}=\bfW^{(k)} - \alpha_k \nabla f_1(\bfW^{(k)})$
\State $\chi_k \in [\epsilon,1]$\;
\State $\bfW^{(k+1)}=\bfW^{(k)} + \chi_k (\prox_{\alpha_k f_2}(\bfZ^{(k)}) -\bfW^{(k)})$.
\EndFor
\end{algorithmic}
\end{algorithm}

As $f(\bfW)$ is strongly convex, it has a unique optimal solution $\bfW^*$. From the convergence analysis in \cite{combettes2005signal}, we obtain the following result.
\begin{proposition}
Let $\{\bfW^{(k)}\}_{k=0}^\infty$ be the sequence generated by Algorithm \ref{fb_alg} for an arbitrary initial point $\bfW^{(0)}$ and for arbitrary selection of step-sizes according to Algorithm \ref{fb_alg}. Then, $\{\bfW^{(k)}\}_{k=0}^\infty$ converges to the unique solution $\bfW^*$. \hfill $\square$
\end{proposition}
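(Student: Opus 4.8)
The plan is to recognize Algorithm \ref{fb_alg} as an instance of the relaxed (over-relaxed) forward--backward splitting iteration analyzed in \cite{combettes2005signal}, and then to verify that the admissible ranges imposed on the step-sizes $\alpha_k$ and relaxation parameters $\chi_k$ fall inside the hypotheses of the corresponding convergence theorem. Writing the (scaled) objective as $f=f_1+f_2$ with $f_1(\bfW)=\frac{1}{2\zeta}\|\check{\bfG}\bfW-\bfX\|^2$ and $f_2(\bfW)=\|\bfW\|_{2,1}$, lines~4--6 of the algorithm combine into $\bfW^{(k+1)}=\bfW^{(k)}+\chi_k\big(\prox_{\alpha_k f_2}(\bfW^{(k)}-\alpha_k\nabla f_1(\bfW^{(k)}))-\bfW^{(k)}\big)$, which is exactly the relaxed forward--backward operator associated with the splitting $(f_1,f_2)$. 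Note that the majorization argument of Proposition \ref{sorrogate} only certifies monotone descent for the special unrelaxed choice $\alpha_k=1/\beta$, $\chi_k=1$; the general variable-step-size, relaxed scheme requires the operator-splitting convergence theory invoked here.

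The first step is to check the structural hypotheses. The function $f_1$ is convex, finite-valued and Fr\'echet-differentiable, and its gradient $\nabla f_1(\bfW)=\frac{1}{\zeta}\check{\bfG}^\herm(\check{\bfG}\bfW-\bfX)$ is Lipschitz with constant $\beta=\frac{1}{\zeta}\lambda_{\max}(\check{\bfG}^\herm\check{\bfG})$, as already recorded in the text preceding Proposition \ref{sorrogate}. The function $f_2=\|\cdot\|_{2,1}$ is a norm, hence proper, convex and continuous (in particular lower semicontinuous), with the explicit row-shrinkage proximal operator computed earlier. In addition, a minimizer exists: $f_2$ is a norm on the finite-dimensional space of $G\times T$ matrices and is therefore coercive, while $f_1\ge 0$, so $f$ is coercive; being convex and lower semicontinuous, $f$ attains its infimum, and the set of minimizers is nonempty, closed and convex.

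The second step is to verify the parameter conditions. The convergence theorem of \cite{combettes2005signal} requires $0<\liminf_k\alpha_k\le\limsup_k\alpha_k<2/\beta$ and relaxation parameters $\chi_k\in(0,1]$ with $\liminf_k\chi_k>0$. Since the initialization fixes $\epsilon\in(0,\min\{1,1/\beta\})$ and the algorithm enforces $\alpha_k\in[\epsilon,2/\beta-\epsilon]$ and $\chi_k\in[\epsilon,1]$, both intervals are nonempty (the choice of $\epsilon$ guarantees $\epsilon\le 2/\beta-\epsilon$ and $\epsilon\le 1$), and the step-sizes and relaxation parameters are uniformly bounded away from $0$, from $2/\beta$, and from $0$ respectively. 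Thus all hypotheses hold for every admissible choice made in Algorithm \ref{fb_alg}, so the generated sequence converges to a point of the minimizer set; because the ambient space is finite-dimensional, the weak convergence of the cited theorem is ordinary convergence. Invoking finally the strong convexity of $f$ noted before the statement, the minimizer is unique and the limit must equal $\bfW^*$.

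The routine content is entirely the bookkeeping of matching the iteration to the reference and checking the parameter ranges. The only genuinely delicate point is the last one, identifying the limit with the prescribed $\bfW^*$: strong convexity of $f$ is not automatic, because $\check{\bfG}$ is wide ($m\ll G$), so $f_1$ is merely convex--not strongly convex--along the nullspace of $\check{\bfG}$, and $f_2$ is positively homogeneous and hence not strongly convex either. I would therefore make the uniqueness claim explicit (relying on the strong-convexity assertion made earlier or on the structure of the $l_{2,1}$ regularizer); absent uniqueness, the argument yields only convergence to the solution set rather than to a single prescribed point.
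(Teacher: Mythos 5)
Your proof is correct and takes the same route as the paper --- in fact the paper provides no proof of this proposition at all: it simply asserts that $f$ is strongly convex and cites the convergence analysis of \cite{combettes2005signal}. Your verification of the structural hypotheses (Lipschitz gradient of $f_1$, lower semicontinuity of $f_2$, existence of minimizers via coercivity) and of the step-size and relaxation ranges $\alpha_k\in[\epsilon,2/\beta-\epsilon]$, $\chi_k\in[\epsilon,1]$ is exactly the bookkeeping the paper leaves implicit.

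Your closing caveat deserves emphasis, because it identifies a genuine defect in the paper's claim rather than in your argument. Since $\check{\bfG}$ is $m\times G$ with $m\ll G$, the function $f_1$ is constant along any segment whose direction matrix has all columns in the kernel of $\check{\bfG}$, and $\|\cdot\|_{2,1}$, being a norm, is affine along rays: taking $\bfW$ and $2\bfW$ with $\check{\bfG}\bfW={\bf 0}$, $f$ is affine on a nontrivial segment. Hence $f$ is neither strongly nor even strictly convex, and uniqueness of the minimizer is not guaranteed in general. One small correction to your phrasing, though: the theorem in \cite{combettes2005signal} yields convergence of the iterates to a \emph{single point} of the (nonempty, closed, convex) solution set, not merely vanishing distance to that set; what fails absent uniqueness is only the identification of that limit with a prescribed $\bfW^*$. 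So the defensible form of the proposition is: the sequence converges to some minimizer of $f$, which coincides with $\bfW^*$ whenever the minimizer is unique.
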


In order to further increase the convergence speed of Algorithm \ref{fb_alg}, we apply Nestrov's update rule \cite{nesterov1983method}, which has been applied for the $l_1$-norm minimization in \cite{beck2009fast}.
\begin{algorithm}[h]
\caption{FBS with Nestrov's Update.}
\label{fb_alg_nest}
\begin{algorithmic}[1]
\State {{\bf Initialization:}} Fix $\bfW^{(0)}$, set $\bfZ^{(0)}=\bfW^{(0)}$, and $t_0=1$.
\For{$k=0,1,\dots,$}
\State $\bfR^{(k)}=\bfZ^{(k)} - \frac{1}{\beta} \nabla f_1(\bfZ^{(k)})$.
\State $\bfW^{(k+1)}=\prox_{\frac{1}{\beta} f_2} (\bfR^{(k)})$.
\State $t_{k+1}=\frac{1+\sqrt{4 t_k^2+1}}{2}$.
\State $\alpha_k=1+\frac{t_k -1}{t_{k+1}}$.
\State $\bfZ^{(k+1)}=\bfW^{(k)} + \alpha_k(\bfW^{(k+1)} - \bfW^{(k)})$.
\EndFor
\end{algorithmic}
\end{algorithm}
\begin{proposition}[Theorem 11.3.1 in  \cite{nemirovski2005efficient}]\label{prop:nest}
Let $\{\bfW^{(k)}\}_{k=0}^\infty$ be the sequence generated by Algorithm \ref{fb_alg_nest} for an arbitrary initial point $\bfW^{(0)}$ and for the step-sizes according to the Nestrov's update rule. Then, for any $k$, we have $f(\bfW^{(k+1)}) - f(\bfW^{*}) \leq \frac{4 \beta \|\bfW^*-\bfW^{(0)}\|^2}{(k+1)^2}$. \hfill $\square$
\end{proposition}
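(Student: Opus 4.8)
The plan is to recognize Algorithm~\ref{fb_alg_nest} as the accelerated proximal-gradient (FISTA) scheme applied to $f=f_1+f_2$ and to reproduce the standard $O(1/k^2)$ Lyapunov/estimate-sequence argument. The only structural inputs I need are that $f_1$ is convex with $\beta$-Lipschitz gradient --- already encoded in the tight quadratic majorant $\check f_1$ of Proposition~\ref{sorrogate} --- and that $f_2$ is convex with the prox step $\bfW^{(k+1)}=\prox_{\frac1\beta f_2}(\bfZ^{(k)}-\frac1\beta\nabla f_1(\bfZ^{(k)}))$ being the exact minimizer of the surrogate $\check f(\cdot)=\check f_1(\cdot)+f_2(\cdot)$ anchored at $\bfZ^{(k)}$.

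First I would establish the fundamental prox-gradient inequality: since $\bfW^{(k+1)}$ minimizes the strongly convex surrogate anchored at $\bfZ^{(k)}$, its optimality condition combined with the convexity of $f_1$ and $f_2$ yields, for every $\bfU$,
\[
f(\bfU)-f(\bfW^{(k+1)}) \ \geq\ \frac{\beta}{2}\,\|\bfW^{(k+1)}-\bfZ^{(k)}\|^2 + \beta\,\inpr{\bfZ^{(k)}-\bfU}{\bfW^{(k+1)}-\bfZ^{(k)}}.
\]
This is the single place where the Lipschitz majorant of Proposition~\ref{sorrogate} enters, and it holds precisely because $\check f_1\geq f_1$ with matching value and gradient at $\bfZ^{(k)}$.

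Next I would specialize this to $\bfU=\bfW^{(k)}$ and to $\bfU=\bfW^{*}$, multiply the two resulting inequalities by the weights $t_{k+1}-1\geq 0$ and $1$, add them, and multiply through by $t_{k+1}$. Writing $v_k:=f(\bfW^{(k)})-f(\bfW^{*})$, the left-hand side becomes $t_{k+1}(t_{k+1}-1)v_k-t_{k+1}^2 v_{k+1}$, which by the defining recursion $t_{k+1}^2-t_{k+1}=t_k^2$ collapses to $t_k^2 v_k-t_{k+1}^2 v_{k+1}$. For the right-hand side I would invoke the elementary identity $\frac{\beta}{2}\|b-a\|^2+\beta\inpr{a-c}{b-a}=\frac{\beta}{2}(\|b-c\|^2-\|a-c\|^2)$ with $a=t_{k+1}\bfZ^{(k)}$, $b=t_{k+1}\bfW^{(k+1)}$, $c=(t_{k+1}-1)\bfW^{(k)}+\bfW^{*}$. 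Setting $\bfU_k:=t_k\bfW^{(k)}-(t_k-1)\bfW^{(k-1)}-\bfW^{*}$, the momentum rule $\bfZ^{(k)}=\bfW^{(k)}+\frac{t_{k-1}-1}{t_k}(\bfW^{(k)}-\bfW^{(k-1)})$ together with $t_{k+1}^2-t_{k+1}=t_k^2$ is exactly what makes $b-c=\bfU_{k+1}$ and $a-c=\bfU_k$, so the inequality telescopes into $t_k^2 v_k+\frac{\beta}{2}\|\bfU_k\|^2\geq t_{k+1}^2 v_{k+1}+\frac{\beta}{2}\|\bfU_{k+1}\|^2$. Hence the Lyapunov sequence $a_k:=t_k^2 v_k+\frac{\beta}{2}\|\bfU_k\|^2$ is non-increasing.

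To close the estimate I would use monotonicity, $t_{k+1}^2 v_{k+1}\leq a_{k+1}\leq a_0$, together with a one-line induction from $t_0=1$ and $t_{k+1}=\frac{1+\sqrt{4t_k^2+1}}{2}$ giving $t_k\geq\frac{k+1}{2}$, hence $t_{k+1}^2\geq\frac{(k+2)^2}{4}$. Since $\bfZ^{(0)}=\bfW^{(0)}$, the anchor $\bfU_0$ reduces to $\bfW^{(0)}-\bfW^{*}$, and tracking the initial term bounds $a_0$ by a multiple of $\|\bfW^{(0)}-\bfW^{*}\|^2$; dividing then yields a bound of the announced shape $f(\bfW^{(k+1)})-f(\bfW^{*})\leq\frac{4\beta\|\bfW^{*}-\bfW^{(0)}\|^2}{(k+1)^2}$, the constant $4$ absorbing the precise initialization accounting. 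Alternatively, one may simply note that Algorithm~\ref{fb_alg_nest} is literally the accelerated forward--backward instance covered by Theorem~11.3.1 of \cite{nemirovski2005efficient} and invoke it directly. I expect the main obstacle to be the bookkeeping of the middle step: choosing the weights $(t_{k+1}-1,1)$ and the post-multiplication by $t_{k+1}$ so that the $t$-recursion cancels the cross term, and verifying that the momentum definition of $\bfZ^{(k)}$ makes the squared-norm difference telescope \emph{exactly} into consecutive $\bfU_k$. The convexity/Lipschitz inputs and the $t_k\geq(k+1)/2$ induction are routine; the delicate point is matching the anchor $a-c$ with $\bfU_k$, where the index shift between the algorithm's $(\bfW^{(k)},\bfZ^{(k)})$ pair and the Lyapunov variables must be tracked with care.
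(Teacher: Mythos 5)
The paper itself contains no proof of this proposition: it is stated purely by citation (Theorem 11.3.1 of \cite{nemirovski2005efficient}; it is the Beck--Teboulle FISTA rate), and the appendices only prove Propositions \ref{convex_eq} and \ref{sorrogate}. So your fallback option --- recognizing Algorithm \ref{fb_alg_nest} as the accelerated forward--backward scheme and invoking the cited theorem --- is literally what the paper does, and your self-contained Lyapunov argument is the substantive content to check. Its architecture (prox-gradient inequality from the majorant of Proposition \ref{sorrogate}, weighted combination at $\bfW^{(k)}$ and $\bfW^{*}$, telescoping via $t_{k+1}^2-t_{k+1}=t_k^2$, and the induction on $t_k$) is indeed the standard and correct route.

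However, the execution has a concrete off-by-one error, exactly at the point you flag as delicate, and as written the telescoping fails. At the iteration producing $\bfW^{(k+1)}$ from $\bfZ^{(k)}$ you use weights $(t_{k+1}-1,\,1)$ and multiplier $t_{k+1}$, while the algorithm's momentum step gives $\bfZ^{(k)}=\bfW^{(k)}+\frac{t_{k-1}-1}{t_k}(\bfW^{(k)}-\bfW^{(k-1)})$, i.e.\ the coefficient is governed by the pair $(t_{k-1},t_k)$, not $(t_k,t_{k+1})$. With your choices, $a-c=\bfW^{(k)}+\frac{t_{k+1}(t_{k-1}-1)}{t_k}(\bfW^{(k)}-\bfW^{(k-1)})-\bfW^{*}$, so the claimed identity $a-c=\bfU_k$ would require $t_{k+1}(t_{k-1}-1)=t_{k-1}^2$; this already fails at $k=1$, where the left side is $0$ (since $t_0=1$) while the right side is $1$. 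The consistent choice is to weight by $(t_k-1,\,1)$ and multiply by $t_k$: then $t_k(t_k-1)=t_{k-1}^2$ collapses the left side to $t_{k-1}^2 v_k-t_k^2v_{k+1}$, and with the anchor $c=(t_k-1)\bfW^{(k)}+\bfW^{*}$ the momentum rule gives $a-c=t_{k-1}\bfW^{(k)}-(t_{k-1}-1)\bfW^{(k-1)}-\bfW^{*}=:\bfV_k$ and $b-c=\bfV_{k+1}$, so that $t_k^2v_{k+1}+\frac{\beta}{2}\|\bfV_{k+1}\|^2\leq t_{k-1}^2v_k+\frac{\beta}{2}\|\bfV_k\|^2$ for $k\geq 1$. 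The same shift repairs your endgame: your claim that $a_0$ is bounded by a multiple of $\|\bfW^{(0)}-\bfW^{*}\|^2$ cannot hold, since $a_0$ contains $v_0=f(\bfW^{(0)})-f(\bfW^{*})$, which (because of the nonsmooth $f_2$) scales only linearly in $\|\bfW^{(0)}-\bfW^{*}\|$ near the optimum. Instead the monotone sequence must be started at $k=1$: the $k=0$ step carries weight $t_0-1=0$ on $\bfW^{(0)}$ and uses $\bfZ^{(0)}=\bfW^{(0)}$, yielding directly $v_1+\frac{\beta}{2}\|\bfV_1\|^2\leq\frac{\beta}{2}\|\bfW^{(0)}-\bfW^{*}\|^2$. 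Combining with $t_k\geq\frac{k+2}{2}$ then gives $f(\bfW^{(k+1)})-f(\bfW^{*})\leq\frac{2\beta\|\bfW^{(0)}-\bfW^{*}\|^2}{(k+2)^2}$, which implies the stated bound with constant $4$. So the strategy is sound and the constants work out, but the weight/multiplier indexing and the initialization step must be corrected for the proof to be valid.
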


\begin{remark}
The advantage of Nestrov's update, as seen from Proposition \ref{prop:nest}, is that the gap to the optimal value, i.e., $f(\bfW^{(k)}) - f(\bfW^{*})$, scales like $O(\frac{1}{k^2})$ as a function of the number of iterations $k$ rather than $O(\frac{1}{k})$ that typically occurs for the selection of step-sizes according to Algorithm \ref{fb_alg}. In particular, the scaling $O(\frac{1}{k^2})$ is optimal \cite{nemirovski2005efficient}. \hfill $\lozenge$
\end{remark}

\begin{remark}\label{rem:slow}
As mentioned in Remark \ref{rem:G_large}, increasing the grid size $G$ does not degrade the performance of the subspace estimation. However, since the Lipschitz constant $\beta=\frac{G}{m \sqrt{T}}$ grows proportionally to $G$, it is seen from Proposition \ref{prop:nest} that increasing $G$ reduces the  speed of  the algorithm. \hfill $\lozenge$
\end{remark}

\subsection{Time-Varying Sampling Operators}\label{sec:time_varying}
When the dimension of sketches is much less than the number of antennas, i.e., $m\ll M$, or when the sampling ratio is much smaller than the normalized angular spread of channel vectors, i.e., $\frac{m}{M} \ll \frac{\Delta \theta}{2 \theta_{\max}}$, using a fixed sampling matrix $\bfB$ results in an aliasing pattern that is not typically resolved even by taking several measurements. Therefore, to improve the estimation performance, it is beneficial to use time-varying sampling matrices $\bfB(t)$ in each slot $t \in [T]$ such that the aliasing caused by a sampling matrix $\bfB(t)$ at  a time $t \in [T]$ is resolved by other sampling matrices $\bfB(t')$ at other times $t'\neq t$.
 Let $\clI_t \subseteq [M]$ denote the indices of the sampled antenna elements at time $t\in[T]$. We always assume that the indices belonging to $\clI_t$ are sorted in an increasing ordered. We follow the \matlab convention that for a vector $\bfm \in \bC^M$, we have $\bfB(t)\bfm= \bfm(\clI_t)$, where $\bfm(\clI_{t})$ denotes an $m$-dim vector containing the components of $\bfm$ belonging to $\clI_t$. All the formulations for the fixed operator $\bfB$ can be immediately extended to the time-variant case by defining $f_2(\bfW)=\|\bfW\|_{2,1}$ and $f_1(\bfW)=\frac{1}{2\zeta}  \sum_{t \in [T]} \|\check{\bfG}_t \bfW_{.,t} - \bfX_{.,t}\|^2$, where $\zeta=\sqrt{T}$ and  where $\check{\bfG}_t=\frac{1}{\sqrt{m}} \bfB(t) \bfG$, and setting $\nabla f_1(\bfW)$ to be the $G\times T$ matrix with $\nabla f_1(\bfW)_{.,t}=\frac{1}{\zeta} \check{\bfG}_t^\herm (\check{\bfG}_t \bfW_{.,t} - \bfX_{.,t})$ for $t\in[T]$. Notice that  $\nabla f_1$ is again a Lipschitz function in this case with a  Lipschitz constant 
\begin{align}
\beta&=\frac{1}{\zeta} \max_{t \in [T]} \big \{ \lambda_{\max}(\check{\bfG}_t^\herm \check{\bfG}_t) \big \}= \frac{1}{\zeta} \max_{t \in [T]} \big \{ \lambda_{\max}(\check{\bfG}_t \check{\bfG}_t^\herm) \big \}\nonumber\\
&= \frac{1}{m \zeta} \max_{t \in [T]} \big \{ \lambda_{\max}(\bfB(t) \sum_{i=1}^G \bfa(\theta_i) \bfa(\theta_i)^\herm \bfB(t)^\herm ) \big \}\nonumber\\
&\approx \frac{G}{m \zeta} \max_{t \in [T]} \big \{ \lambda_{\max}(\bfB(t) \bfI_M \bfB(t)^\herm ) \big \}\nonumber\approx\frac{G}{\zeta m}=\frac{G}{m \sqrt{T}},
\end{align}
which is the same as in the time-invariant case. This implies that all the steps of Algorithm \ref{fb_alg} and \ref{fb_alg_nest}, and their convergence guarantee still hold in this case.

\subsection{Computational Complexity}\label{sec:comp_complexity}
Each iteration of both Algorithm \ref{fb_alg} and \ref{fb_alg_nest} requires computing $T$ columns of $\nabla f_1$, where the $t$-th column, $t \in [T]$, is given by $\nabla f_1(\bfW)_{.,t}=\frac{1}{\zeta}\check{\bfG}_t^\herm (\check{\bfG}_t \bfW_{.,t} - \bfX_{.,t})$, evaluated at $\bfW=\bfW^{(k)}$ at iteration $k$. 
For the special grid $\clG$ with the discrete AoAs $\theta_i:=\sin^{-1}\big((-1+\frac{2(i-1)}{G}) \sin(\theta_{\max})\big)$, $i\in [G]$,  in the angular range $[-\theta_{\max}, \theta_{\max}]$, the matrix $\bfG$ becomes an oversampled Fourier matrix, namely, the columns of $\bfG$ are given by $(\omega_G^c, \omega_G^{2c}, \dots, \omega_G^{Mc})^\transp$, where $\omega_G=e^{j\frac{\pi}{G}}$ and where $c\in \{-G, -G+2, \dots, G-1\}$. This special structure of $\bfG$, as a result that of $\check{\bfG}$, can be exploited to compute $\nabla f_1(\bfW)$ quite efficiently.

For each $t\in [T]$, we first compute $\check{\bfG}_t \bfW_{.,t}$. Following the \matlab notation, let ${\bfm}=G\, \mathsf{ifft}(\bfW_{.,t},G) \in \bC^G$ be the inverse \textit{Discrete Fourier Transform} (DFT) of $\bfW_{.,t}$ scaled with $G$, which can be efficiently computed using the \textit{Fast Fourier Transform} (FFT) algorithm provided that $M$ and $G$ are powers of $2$. Then, $\check{\bfG}_{t} \bfW_{.,t}$ is simply given by $\frac{1}{\sqrt{m}} {\bfm}(\clI_{t})$, where $\clI_{t}$ denote the indices of the sampled antennas at $t \in [T]$. The whole complexity of this step of calculation for all $t\in [T]$ is $O\big(T G \log_2(G) \big )$. 
After computing $\check{\bfG}_t \bfW_{.,t}$, $t \in [T]$, we need to calculate $\check{\bfG}_t^\herm \bfr_{t}$, where $\bfr_{t}=\check{\bfG}_{t} \bfW_{.,t} - \bfX_{.,t}$. This can be simply done by setting $\bfm$ to be an $M$-dim all-zero vector, and embedding $\bfr_t$ in $\bfm$ in indices belonging to $\clI_t$ such that $\bfm(\clI_{t})=\bfr_{t}$ and taking the $G$-point DFT of $\bfm$, which gives $\check{\bfG}_t^\herm \bfr_{t}=\frac{1}{\sqrt{m}} \mathsf{fft}(\bfm, G)$. The whole complexity of this step for all $t \in [T]$ is again $O\big(T G \log_2(G) \big )$.

Letting $T_\mathsf{conv}$ be the number of iterations necessary for the convergence, the whole computational complexity is $O\big(2 T_\mathsf{conv} T G \log_2(G) \big )$, which is at least two orders of magnitude less than the complexity of directly solving the SDP \eqref{eq:aml} with off-the-shelf SDP solvers. As we explained in Remark \ref{rem:G_large}, increasing the grid size $G$ does not degrade the recovery performance. However, as also mentioned in Remark \ref{rem:slow}, it increases the Lipschitz constant $\beta$ of $\nabla f_1$ and slows down the convergence of the algorithm. The main reason is that increasing $\beta$ makes the shrinkage operation in the proximal operator $\prox_{\frac{1}{\beta} f_2}$ softer. As a result, the algorithm requires more iterations to identify the dominant grid elements. Thus, we expect that $T_{\mathsf{conv}}$ scale proportionally to the oversampling factor $\frac{G}{M}$. We always use $\frac{G}{M}=2$. Our numerical simulations show that for this choice of oversampling factor, both Algorithm \ref{fb_alg} and \ref{fb_alg_nest}, and especially Algorithm \ref{fb_alg_nest},  converge in only a couple of iterations.

\section{Extension to Other Array Geometries}
\subsection{2D Rectangular Array Configurations}
Our proposed algorithms can be extended to a 2D rectangular array consisting of $M=M_xM_y$ antenna elements, arranged over a rectangular grid%
 \small
\begin{align*}
\clR=\Big\{\big ((i-\frac{M_x+1}{2})d_x, (j - \frac{M_y+1}{2}) d_y\big): i \in [M_x], j \in [M_y]\Big \},
\end{align*}\normalsize
in the 2D plane of the array, having a horizontal spacing $d_x$ and a vertical spacing $d_y$ between its elements. We consider a 3D Cartesian coordinate chart with an $xy$-plane given by the 2D plane of the array and with a $z$-axis orthogonal to it. 
We denote the $M$-dim ($M=M_xM_y$) array responses by $\bfa(\xi)$, where $\xi$ belongs to the unit 2D sphere $\bS^2=\{\xi \in \bR^3: \|\xi\|=1\}$ (lying in 3D space) and parameterizes the AoAs; sometimes, it is better to use a coordinate chart for $\bS^2$ in which every point $\xi$ is represented by two angles: the polar angle $\theta$ and the azimuthal angle $\phi$. It is more convenient to denote the $M$-dim  array response $\bfa(\xi)$ with double index $(x,y) \in [M_x]\times [M_y]:=\{(i_x,i_y): i_x \in [M_x], i_y\in [M_y]\}$, where we have
\begin{align}
[\bfa(\xi)]_{xy}=e^{j \frac{2\pi}{\lambda} \inp{\xi}{\bfr_{xy}}},
\end{align}
where $\bfr_{xy}=\big ( (x-\frac{M_x+1}{2}) d_x, (y-\frac{M_y+1}{2}) d_y,0\big )\in \bR^3$ denotes the location of the array element indexed by $(x,y)$ in the 2D plane of the array ($xy$-plane).
The channel vector of a user, whose scattering channel consists of a collection of $p$ scatterers with AoAs parameterized by $\{\xi_i: i \in [p]\}$ and channel gains $\{w_i(t): i \in [p]\}$, is given by $\bfh(t)=\sum_{i=1}^p w_i(t)\bfa(\xi_i)$.  The channel covariance matrix is also given by $\bfS=\sum_{i=1}^p \sigma_i^2  \bfa(\xi_i) \bfa(\xi_i)^\herm$, which using the double-index notation can be represented by 
\begin{align}\label{eq:blt_not}
[\bfS]_{xy,x'y'}&=\sum_{i=1}^p \sigma_i^2 [\bfa(\xi_i)]_{xy} [\bfa(\xi_i)^\herm]_{x'y'}\nonumber\\
&=\sum_{i=1}^p \sigma_i^2 e^{j \frac{2\pi}{\lambda} \inp{\xi}{\bfr_{xy}-\bfr_{x'y'}}}\nonumber\\
&=\sum_{i=1}^p \sigma_i^2 e^{j \frac{2\pi}{\lambda} \inp{\xi}{\bfr_{x-x',y-y'}}}.
\end{align}
It is seen from \eqref{eq:blt_not} that $\bfS$ has a block-Toeplitz form, i.e., it can be represented by an $M\times M$ matrix containing $M_x\times M_x$ blocks of dimension $M_y\times M_y$, where the matrix at block $(x,x')$ is given by $\scrU_{x-x'}$ and depends only on $x-x'$, where we also have $\scrU_{-k}=\scrU_{k}^\herm$, $k \in [M_x]$, due to the Hermitian symmetry. Similarly, $\bfS$ can be represented with an $M\times M$ matrix containing $M_y\times M_y$ blocks of dimension $M_x\times M_x$, where the matrix at block $(y,y')$ is given by $\scrV_{y-y'}$, with $\scrV_{-k}=\scrV_{k}^\herm$, $k \in [M_y]$. In fact, $\bfS$ is even more structured since all the diagonal blocks of $\bfS$ in both block representations are equal to a Toeplitz matrix, whereas a block-Toeplitz matrix generally might not have Toeplitz diagonal blocks. The originally proposed AML Algorithm for the ULA in \cite{haghighatshoar2016channel, haghighatshoar2016massive} can be generalized to 2D rectangular arrays. It can be formulated as an SDP similar to \eqref{eq:aml} by replacing the constraint set $\clT_+$ with the set of PSD Hermitian block-Toeplitz matrices denoted by $\clB\clT_+$, which is still a convex set.  

We again assume that in each slot $t\in [T]$, we only sample a collection of $m\ll M$ array elements via a possibly time-variant sampling matrix $\bfB(t)$. Similar to the previous case for ULA, we define a 2D grid $\clG$ of size $G$ by quantizing the continuum of AoAs, and construct the $M\times G$ matrix consisting of the array responses over the discrete AoAs belonging to $\clG$. A direct inspection in the proof of Proposition \ref{convex_eq} indicates that the SDP for AML Algorithm in this case can still be approximated by the $l_{2,1}$-norm regularized convex optimization in \eqref{eq:l2_1}.
All the steps of the algorithm and all the parameters remain the same as in the case of ULA.  However, due to the 2D lattice array configuration, we need to apply 2D DFT to compute $\nabla f_1(\bfW)$ in each step rather than 1D DFT used for the ULA. This can still be efficiently computed provided that both $M_x$ and $M_y$, and the oversampling ratios $\frac{G_x}{M_x}$ and $\frac{G_y}{M_y}$ are powers of $2$, where the total computational complexity is again given by  $O\big(2 T_\mathsf{conv} T G \log_2(G) \big )$. 

As explained in Section \ref{sec:comp_complexity}, using the computational advantage of FFT algorithm requires a special design of the grid $\clG$ that contains the AoAs $\theta_i=\sin^{-1}\big((-1+\frac{2(i-1)}{G}) \sin(\theta_{\max})\big)$, $i\in [G]$. However, if $G$ is large enough, $\clG$ has a performance comparable with any other grid of similar size in approximating the signal covariance matrix (see \eqref{eq:denseness}) since it covers the whole angular range $[-\theta_{\max}, \theta_{\max}]$. Unfortunately, this is not the case for 2D rectangular arrays: exploiting the computational advantage of 2D FFT restricts the range of AoAs that can be processed. To explain this better, let us consider two uniform grids: $\clG_x$ a grid of size $G_x$ in $[-\xi^x_{\max}, \xi^x_{\max}]$ and $\clG_y$ a grid of size $G_y$ in $[-\xi^y_{\max}, \xi^y_{\max}]$, where $\xi^x_{\max}$ and $\xi^y_{\max}$ are such that $\xi^x_{\max} \overline{d}_x=\xi^y_{\max} \overline{d}_y=1$, where we define $\overline{d}_x=d_x/(\lambda/2)$ and $\overline{d}_y=d_y/(\lambda/2)$ as the normalized horizontal and vertical spacing between the array elements in the 2D grid $\clR$. 
We also assume that $\xi^x_{\max}$ and $\xi^y_{\max}$ satisfy the additional constraint $(\xi^x_{\max})^2+ (\xi^y_{\max})^2 \leq 1$. Let us consider the following grid consisting of $G=G_xG_y$ points on the unit sphere $\bS^2$, each representing a specific AoA:
\begin{align}\label{eq:uni_grid_def}
\clG=\{(\xi_x,\xi_y, \sqrt{1-\xi_x^2-\xi_y^2}): \xi_x \in \clG_x, \xi_y \in \clG_y\}.
\end{align}
This is illustrated in Fig.\,\ref{fig:2D_grid} with the grid points lying on the unit sphere. It is seen that $\clG$ can  cover only a subset of all possible AoAs.
\begin{figure}[t]
\centering
\includegraphics{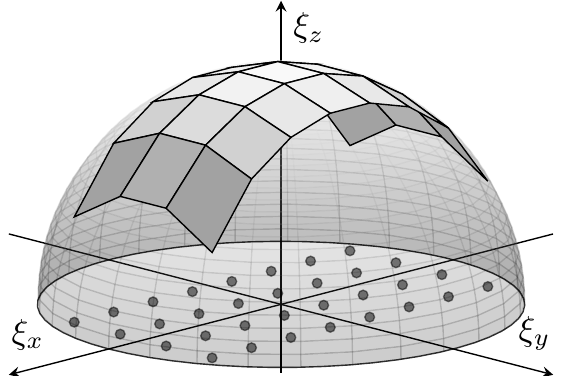}
\caption{A non-uniform grid of AoAs over the unit sphere, whose projection on the $\xi_x\xi_y$-plane is a rectangular grid.}
\label{fig:2D_grid}
\end{figure}
From \eqref{eq:uni_grid_def} and the definition of $\clG_x$ and $\clG_y$, it is not difficult to check that the projection of the grid points on the $\xi_x\xi_y$-plane builds a 2D rectangular grid enclosed by the rectangle $[-\xi^x_{\max}, \xi^x_{\max}]\times [-\xi^y_{\max}, \xi^y_{\max}]$. Also, notice that the array response $\bfa(\xi)$ at a $\xi \in \clG$ is given by $[\bfa(\xi)]_{xy}=e^{j \pi (x-\frac{M_x+1}{2})\overline{d}_x\xi_x} e^{j \pi (y-\frac{M_y+1}{2})\overline{d}_y\xi_y}$, where $x\in[M_x]$, $y \in [M_y]$, and $\xi_x \in \clG_x$ and $\xi_y\in \clG_y$ denote the $xy$ component of $\xi$.  Since $x$, $y$, $\xi_x$ and $\xi_y$ all take values in discrete lattices (with uniform spacing $1$, $1$, $1/\overline{d}_x$ and $1/\overline{d}_y$ respectively), letting $\bfG$ be the $M\times G$ matrix whose columns are given by $\bfa(\xi)$, $\xi \in \clG$, with a suitable ordering, we obtain the 2D DFT matrix. If $M_x$ and $M_y$ are powers of $2$ and the oversampling ratios $\frac{G_x}{M_x}$ and $\frac{G_y}{M_y}$ are also powers of two, similarly to the case of ULA in Section \ref{sec:comp_complexity}, we can apply the 2D FFT algorithm to compute $\nabla f_1$ quite fast.

\begin{remark}
Although $\xi^x_{\max} \overline{d}_x$ and $\xi^y_{\max} \overline{d}_y$ could be selected to be less than $1$, this unreasonably restricts the spatial resolution of the array. For a practical design, we should first decide on the subset of AoAs on the unit sphere that we intend to process, with the additional constraint that the projection of this subset on the $\xi_x\xi_y$-plane must lie in a symmetric rectangular region, which is necessary in order to take advantage of computational benefits of 2D FFT. This yields the desired $\xi^x_{\max}$ and $\xi^y_{\max}$. Finally, to obtain the best spatial resolution in the desired region, the array spacings $d_x$ and $d_y$ should be set to their maximum values such that $\xi^x_{\max} \overline{d}_x=\xi^y_{\max} \overline{d}_y=1$. \hfill $\lozenge$
\end{remark}

\subsection{General Array Configurations}
Consider a general array configuration, in which the array responses are parameterized by $\{\bfa(\xi): \xi \in \Xi\}$, for some parameter set $\Xi$ representing the AoAs. In this case, the space of all feasible signal covariance matrices is given by the set of all $M\times M$ matrices $\clS:=\{\int_{\Xi} \gamma(d\xi) \bfa(\xi) \bfa(\xi)^\herm: \gamma \in \clP(\Xi) \}$ where $\clP(\Xi)$ denotes the space of all positive measures over $\Xi$. Notice that $\clS$ is indeed a convex subset (cone) of the cone of all $M\times M$ PSD matrices. Depending on the array geometry, it might happen that $\clS$ has a simple algebraic representation that can be exploited in the optimizations. For example, $\clS$ coincides with the space of PSD Hermitian Toeplitz matrices $\clT_+$ for the ULA, and with the space of PSD Hermitian block-Toeplitz matrices $\clB\clT_+$ for a 2D lattice array configuration.

The SDP formulation \eqref{eq:aml} for the AML Algorithm can be extended to this case by replacing $\clT_+$ with $\clS$. In particular, the equivalence between SDP \eqref{eq:aml} and the $l_{2,1}$-regularized convex optimization \eqref{eq:l2_1} still holds provided that the set $\Xi$ is quantized with a sufficiently dense grid $\clG$, such that $\sup_{\xi \in \Xi} \inf_{ \xi' \in \clG} \|\bfa(\xi)\bfa(\xi)^\herm - \bfa(\xi')\bfa(\xi')^\herm\| \leq \epsilon M$ holds for a sufficiently small $\epsilon \in (0,1)$. Due to the iso-norm property of the array response vectors $\{\bfa(\xi): \xi \in \Xi\}$, this condition is satisfied provided that $\sup_{\xi \in \Xi} \inf_{ \xi' \in \clG} \|\bfa(\xi) - \bfa(\xi')\| \leq \epsilon \sqrt{M}$. Overall, we can not expect to have the low  per-iteration computational complexity $O(TG \log_2(G))$ obtained because of using the FFT algorithm in the case of the ULA or the 2D lattice array  unless the covariance matrices in $\clS$ have other special algebraic structures that can be exploited to speed up numerical computations. Furthermore, one might also need to restrict the range of AoAs, as in the 2D lattice configuration, to benefit this underlying algebraic structure.

\section{Subspace Tracking}\label{sec:tracking}
\subsection{Extending the Algorithm to the Tracking mode}
Up to now, we have assumed that, although the channel gains $\bfw(t)$ as in \eqref{eq:disc_ch_mod}, and as a result the channel vectors $\bfh(t)$ vary i.i.d. with time, the underlying channel geometry $\{(\sigma_i^2, \theta_i): i \in [p]\}$ embedded in the covariance matrix $\bfS=\sum_{i=1}^p \sigma_i^2 \bfa(\theta_i) \bfa(\theta_i)^\herm$ remains stable for quite a long time, especially much longer than the window size $T$. This allows the signal subspace to be estimated from the low-dim sketches inside the window $[T]$, and to be used for the rest of time. In practice, $\{\bfh(t)\}_{t=1}^\infty$ as a stochastic process is only locally stationary and its statistics (covariance matrix) is piecewise constant, i.e., constant over 
rather long intervals of time (time scale of one to tens of seconds) and changes with abrupt transitions when the scattering environment of the user changes (e.g., while moving form indoor to outdoor or turning from one street to another for a moving vehicle). In any case, the duration of the time intervals over which the covariance is time-invariant is $3$ up to $4$ orders of magnitude larger than the duration of the data transmission slots. Therefore, we can collect a window of $T$ i.i.d. samples (for a sufficiently large $T$) in the time-frequency domain over each interval \cite{mahler2015propagation}.

Traditionally, there are two approaches in the literature to deal with sharp transitions in signal statistics: 1)\,change point detection (see \cite{poor2009quickest} and refs. therein) and 2)\,online tracking (see \cite{comon1990tracking, yang1995projection, crammer2006online, balzano2010online, chi2013petrels, sayed2003fundamentals} and the refs. therein). Adapted to the subspace estimation in this paper, in the former, one applies change point detection algorithms to identify the transition points in the statistics, and upon identifying a transition point, the subspace estimation algorithm is run to reestimate/update the signal subspace from new observations. The resulting estimate is used until the next transition point is identified. In the latter, in contrast, upon receiving a new observation (sketch) $\bfx(t)$ at time $t$, the tracking algorithm updates its estimate of the signal subspace $\bfS(t)$ by 
\begin{align}\label{eq:subspace_update}
\bfS({t+1})=\alpha \bfS(t) + (1-\alpha) \scrI(\bfx(t)),
\end{align}
where $\alpha \in (0,1)$ is the update factor, and where $\scrI(\bfx(t))$ is a subspace innovation term that depends on the newly received sketch $\bfx(t)$ (see \cite{comon1990tracking, yang1995projection, crammer2006online, balzano2010online, chi2013petrels} and the refs. therein). The choice of $\alpha$ makes a trade-off between the quality of the subspace estimation in the stationary regime (variance) and the tracking ability of the algorithm in the non-stationary transition regime (bias)\footnote{This is well known as the bias-variance trade-off in statistics.}: the closer $\alpha$ to $1$, the less variance in subspace estimation in the stationary regime, and the closer $\alpha$ to $0$ the faster the subspace identification after occurring a sharp transition in the non-stationary transition region. Broadly speaking,  for a given $\alpha \in (0,1)$ the data used for subspace estimation effectively comes from a window of   $T_\alpha \approx \frac{1}{\log(\frac{1}{\alpha})}$  latest observations. In fact, $\alpha \to 1$ makes $T_\alpha$ larger and improves the subspace estimation provided that the window of observations lies in a stationary regime. However, increasing $T_\alpha$ also increases the probability of having a  transition in the middle of the window, in which case the subspace estimation algorithm requires around $T_\alpha$ new observations to identify the new subspace after the transition, thus, making the tracking algorithm less agile.

Our proposed algorithms can be run in the tracking mode as follows. We fix a window size $T$, which corresponds to selecting a suitable value for the tracking parameter $\alpha \in (0,1)$ in the tracking algorithm. At every time $t$, we always keep the latest $T$ sketches $\clW_t:=\{\bfx(t-T+1), \dots, \bfx(t)\}$ and update it upon receiving a new sample $\bfx(t+1)$ as $\clW_{t+1}=\clW_t \cup \{\bfx(t+1)\} \backslash \{\bfx(t-T+1)\}$. We use the optimal solution $\bfW^*(t)$ of the convex optimization \eqref{eq:l2_1}, when the matrix of sketches is set to $\bfX=\clW_t$, as a \textit{warm} initialization to the algorithm at time $t+1$. Typically $T \gg 1$, and we expect that adding the new sketch $\bfx(t+1)$ does not effect the optimal solution considerably. In fact, for a window size $T$, we expect intuitively that $\frac{\|\bfW^*(t+1) - \bfW^*(t)\|^2}{\|\bfW^*(t)\|^2}=O(\frac{1}{T})$. On one hand, this implies that, for a large $T$, only $O(1)$ number of iterations would be sufficient to reach from the  old estimate $\bfW^*(t)$ (used as the initialization point) to the new estimate $\bfW^*(t+1 )$, thus, the whole complexity of the subspace update would be of the order $O(2 G \log_2(G))$ per each newly arrived observation. On the other hand, this indicates that, as expected, increasing the window size $T$, makes the algorithm less agile to sharp subspace transitions since the new estimate $\bfW^*(t+1)$ can not move far from the old one $\bfW^*(t)$ in a single iteration.

\subsection{Further Simplified Subspace Tracking}\label{simple_tracking}
In practical implementations, updating the subspace at each time $t$ requires the following steps: updating the weighting matrix $\bfW(t)$, computing the estimate of signal covariance matrix according to Proposition \ref{convex_eq}, and computing the SVD of the resulting covariance matrix and identifying its dominant subspace. This might be too complicated in some real-time implementations in massive MIMO. Instead, we can use the weighting matrix $\bfW(t)$ to identify the position of dominant elements (active elements) in the over-complete dictionary over the grid given by $\bfG$. This only requires updating the $l_2$-norm or equivalently the $l_2$-norm squared of the rows of $\bfW(t)$, i.e., $\|\bfW(t)_{i,.}\|^2$ for $i\in [G]$, after each iteration $t$, which can be done quite fast. At each time $t$, the $M\times q$ submatrix of $\bfG$ (for some $q \ll M$) corresponding to the dominant rows of $\bfW(t)$ with significantly large $l_2$ norms  provides an estimate of the dominant signal subspace.

\section{Simulation Results}\label{sec:sim}
In this section, we evaluate the performance of our proposed subspace estimation/tracking algorithm empirically via numerical simulations. 
\subsection{Array Model}\label{sec:arr_model}
We consider a ULA with $M=64$ antennas, where in each training period, we randomly sample only $m=16$ of them, thus, a sampling ratio of $\rho:=\frac{m}{M}=0.25$. We assume that the array has $\theta_{\max}=60$ degrees and scans an angular range of $\Delta \theta_{\max}=2\theta_{\max}=120$ degrees. For all the simulations, we use a grid size of $G=2M$, thus, a grid oversampling factor of $2$, and optimize \eqref{eq:l2_1} by running Algorithm \ref{fb_alg_nest} with the  Nestrov's update rule and with i.i.d. time-varying sampling matrices explained in Section \ref{sec:time_varying}.

\subsection{Scaling with respect to Training Signal-to-Noise Ratio}\label{sim:scale_snr}
We consider a scattering geometry, in which the received signal power of a given user is uniformly distributed over the angular range $\Theta=[10, 30]$ degrees, with an angular spread of $20$ degrees. 
We define the training \textit{Signal-to-Noise Ratio} (SNR) by $\frac{\bE[\|\bfh(t)\|^2]}{\bE[\|\bfn(t)\|^2]}$, where $\bfh(t)$ denotes the user channel vector and where $\bfn(t)$ is the array noise at time $t$ as in \eqref{eq:disc_ch_mod}.

 Fig.\,\ref{fig:perf_se_snr} illustrates the scaling of the performance metric $\Gamma(\bfp, \widehat{\bfp})$ defined in Section \ref{sec:perf_metric} versus the training SNR for different training lengths (window sizes) $T \in \{50,100, 200\}$. To obtain the curve for each $T$, we average the resulting performance of the subspace estimator versus SNR over $100$ independent simulations, where in each simulation we run the algorithm  for approximately $50$ iterations before convergence.
 We consider two different family of measurement matrices for simulation: i)\,The 0-1 sampling matrices denoted by ``Bin'', where each row of the matrix contains one $1$ at a specific random column and corresponds to a random antenna selection as explained in Section \ref{sec:samp_op}, and ii)\,The random phase-shift matrices denoted by ``PS'', where the sampling matrices $\bfB(t)$ are generated according to $[\bfB(t)]_{r,c}=\frac{e^{j \theta_{r,c}(t)}}{\sqrt{M}} $, where $\theta_{r,c}(t)$ are selected i.i.d. randomly in each row $r$, column $c$, and across different slots $t\in [T]$ form the set of quantized phases $\{\frac{k 2\pi}{2^b}: k=0, \dots, 2^b-1\}$ with $b=5$ bit precision. In massive MIMO applications, the former family of sampling matrices can be implemented via switches whereas the latter is more suitable for implementation via constant-amplitude  phase-shafting networks.
It is seen from Fig.\,\ref{fig:perf_se_snr} that both family of sampling matrices have quite similar performances although the 0-1 sampling matrices yield a more efficient numerical implementation via the FFT algorithm as explained in Section \ref{sec:comp_complexity} by essentially avoiding any matrix multiplication.
The simulation results are qualitatively similar to the results in \cite{haghighatshoar2016channel, haghighatshoar2016massive} but are obtained for a different performance metric. In applications such as JSDM, the practically important SNR regime is around $0$\,dB up to $10$\,dB in which the system has a considerably high throughput (measured in terms of the achievable sum-rate or spectral efficiency). Fig.\,\ref{fig:perf_se_snr} illustrates that in this regime of SNR, our proposed subspace estimation scheme has an excellent performance. 

\begin{figure}[t]
\centering
\includegraphics{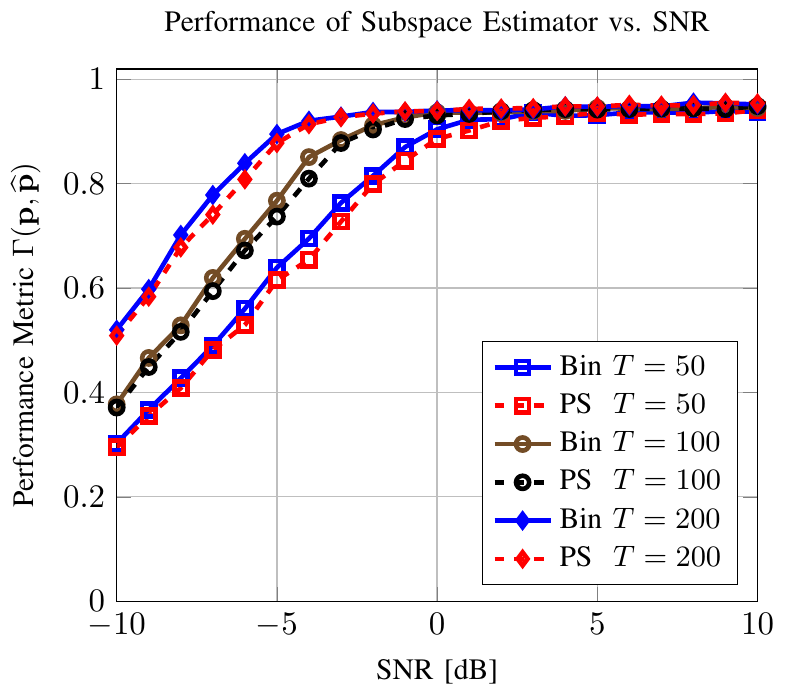}
\caption{{\small Performance of our proposed subspace estimation algorithm versus SNR for different training lengths $T\in \{50, 100, 200\}$ for 0-1 sampling matrices ``Bin'' and random phase-shift matrices ``PS''. The reason $\Gamma(\bfp,\widehat{\bfp}) \nrightarrow 1$ for large SNR is due to the finite grid size $G$.}}
\label{fig:perf_se_snr}
\end{figure}

\begin{figure}[t]
\centering
\includegraphics{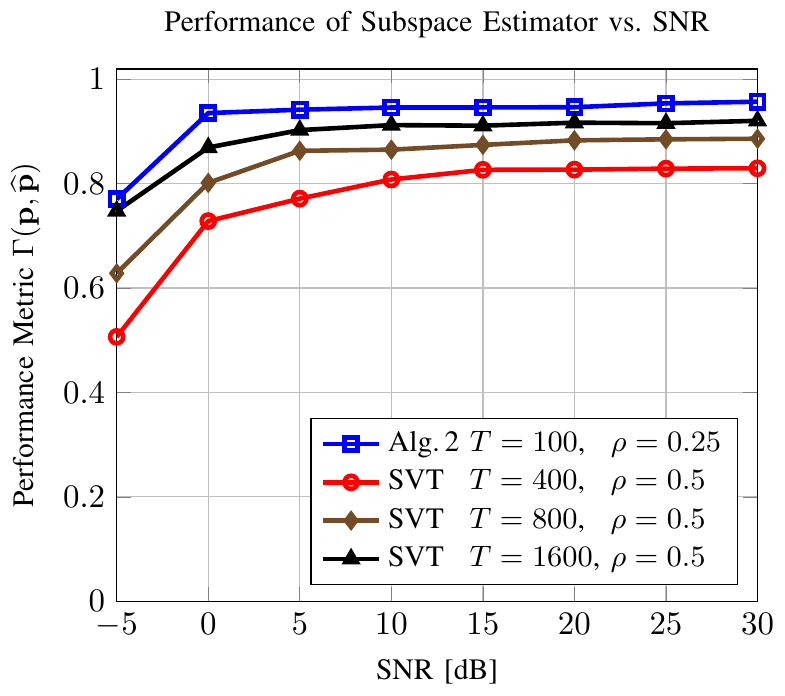}
\caption{{\small Comparison of performance versus SNR of our algorithm (Alg.\,2) for with that of SVT for different number of sketches $T$ and different sampling ratios $\rho=\frac{m}{M}$.}}
\label{fig:comp_with_svt}
\end{figure}

\subsection{Comparison with SVT Algorithm}
We compare the performance of our algorithm with that of the SVT (singular value thresholding) Algorithm in \cite{cai2010singular}, which provides the state of the art performance in subspace estimation (and matrix completion).
SVT is a \textit{batch} algorithm, i.e., it estimates the signal subspace from a fixed number of sketches $T$, in contrast with an \textit{online} tracking algorithm where the number of sketches increases with time\footnote{Although the effective number of sketches still remains constant and depends on the subspace update factor $\alpha$ as in \eqref{eq:subspace_update}.}.  Each iteration of SVT consists of computing the SVD of an $M\times T$ matrix followed by a singular value thresholding, which can be done efficiently (e.g., via Lanczos algorithm) when the matrix is quite low-rank. However, our algorithm is much faster since it does not requires any SVD computation or matrix multiplication as all the calculations are done via the FFT algorithm.

For simulations, we assume that the received signal power of the user is uniformly distributed over the angular range $\Theta=[10, 30]$ degrees as in Section \ref{sim:scale_snr}.
Fig.\,\ref{fig:comp_with_svt} illustrates the comparison of the performance versus SNR of our algorithm for $T=100$ and a sampling ratio $\rho=0.25$ with that of SVT for different $T\in \{400,800,1600\}$ and  a sampling ratio $\rho=0.5$. It is seen that  our algorithm performs much better than SVT even for a smaller data size $T$ and antenna sampling ratio $\rho$. 
In particular, for a target estimation performance, it runs orders of magnitude faster since it requires smaller $T$ and consumes much less storage (scaled proportionally to $M\times T$).

\subsection{Tracking Performance}\label{tracking_perf}
Fig.\,\ref{fig:perf_tracking_snr} illustrates the tracking performance of our proposed estimator for different window size $T\in\{50,100,200\}$ and different SNR when there is a sharp transition in the channel statistics (geometry).  We consider a time interval of length $400$ for simulation that contains a sharp transition in the middle  at  time $t_\mathsf{tr}=200$. For $t =1, \dots, t_\mathsf{tr}-1$, the user signal power is uniformly distributed in  the angular range $\Theta=[10, 30]$ degrees, with an angular spread of $20$ degrees as in Section \ref{sim:scale_snr}, whereas at time $t=t_\mathsf{tr}$, the geometry of the channel changes abruptly such that for $t=t_\mathsf{tr},\dots, 400$ the user signal power is uniformly distributed in the angular range $\Theta'=[-40,-20]$, where $\Theta \cap \Theta'=\emptyset$.

Fig.\,\ref{fig:perf_tracking_snr} illustrates a random sample path of the performance metric $\Gamma(\bfp, \widehat{\bfp})$ generated by the algorithm for the 0-1 sampling matrices during the whole simulation.  
To generate these plots, we run our proposed subspace estimation algorithm in a tracking mode as explained in Section \ref{sec:tracking}, in which upon receiving  a new sketch, we run only \textit{one} iteration of our proposed algorithm while treating the previous estimate as  initialization. We start the algorithm with the zero initialization at time $t=0$, where it is seen from Fig.\,\ref{fig:perf_tracking_snr} that the algorithm identifies the signal subspace in quite a short time. It is also seen that immediately after the sharp transition in the channel, the performance metric $\Gamma(\bfp, \widehat{\bfp}) \to 0$, however, the algorithm is able to track/identify the new signal subspace in quite a short time. Interestingly, it is seen that for a window of size $T$, the delay before identifying/tracking the new subspace is around $\frac{T}{2}$, namely, immediately after half the observation window is filled with new sketches generated with the new channel geometry, the algorithm makes a sharp transition from the old signal subspace to the new one.

\begin{figure}[t]
\centering
\includegraphics{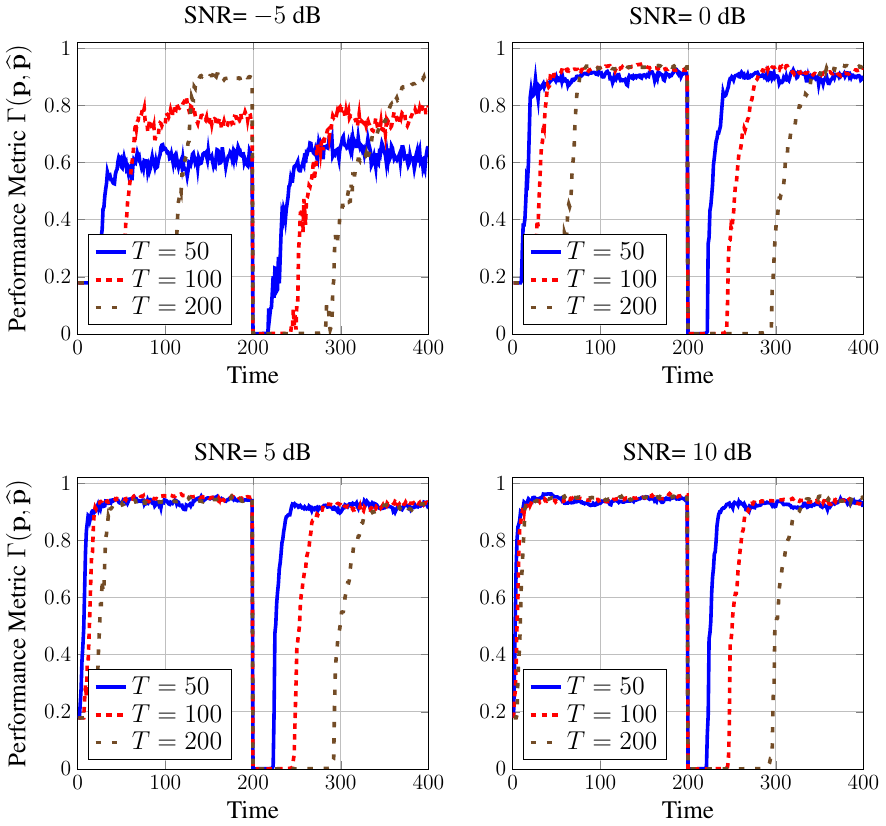}
\caption{{\small Tracking performance of our proposed algorithm (Alg.\,2) for different window sizes $T\in\{50,100,200\}$ and for different training SNR. There is a transition in statistics at time $t_{\mathsf{tr}}=200$.}}
\label{fig:perf_tracking_snr}
\end{figure}

\subsection{Comparison with PETRELS Algorithm}
We compare the performance of our algorithm in the tracking mode with that of PETRELS Algorithm proposed in \cite{chi2013petrels}. Note that PETRELS is an \textit{online} algorithm (in contrast with SVT which is a batch algorithm) and provides the state of the art performance in online subspace estimation. 
In the online mode the number of sketches is not fixed (in contrast with the batch mode) and increases with time. Denoting by $T$ the size of the window of sketches used by our proposed algorithm in the tracking mode, we set the update factor of PETRELS to $e^{-\frac{\kappa}{T}}$, i.e., $\alpha=e^{-\frac{\kappa}{T}}$ with our notation in \eqref{eq:subspace_update}, where we tune $\kappa \in [1,5]$ to obtain the best performance. In this way, we  make sure that the effective number of sketches used by PETRELS comes from a window of approximate size $T$ of the latest sketches. 

Fig.\,\ref{fig:comp_with_petrels} illustrates the simulation results. We compare the performance of PETRELS with that of our proposed tracking algorithm for different $T$ and antenna sampling ratio $\rho$. It is evidently seen that our algorithm has a much superior performance in the stationary regime and tracks the subspace transitions much faster in the non-stationary regime. Fig.\,\ref{fig:comp_with_petrels} also illustrates the interesting bias-variance trade-off underlying the PETRELS performance (which also exists in our proposed tracking algorithm as can be seen from Fig.\,\ref{fig:perf_tracking_snr}), where for a fixed sampling ratio $\rho$, the performance of PETRELS in the stationary regime (variance) improves by increasing the window size $T$, but this comes at the price of much slower tracking in the non-stationary regime. It is also seen from Fig.\,\ref{fig:comp_with_petrels} that the performance of PETRELS degrades considerably by reducing the antenna sampling factor $\rho$, whereas our algorithm is less sensitive to $\rho$ and works perfectly even with a quite low $\rho=0.25$.

\begin{figure}[t]
\centering
\includegraphics{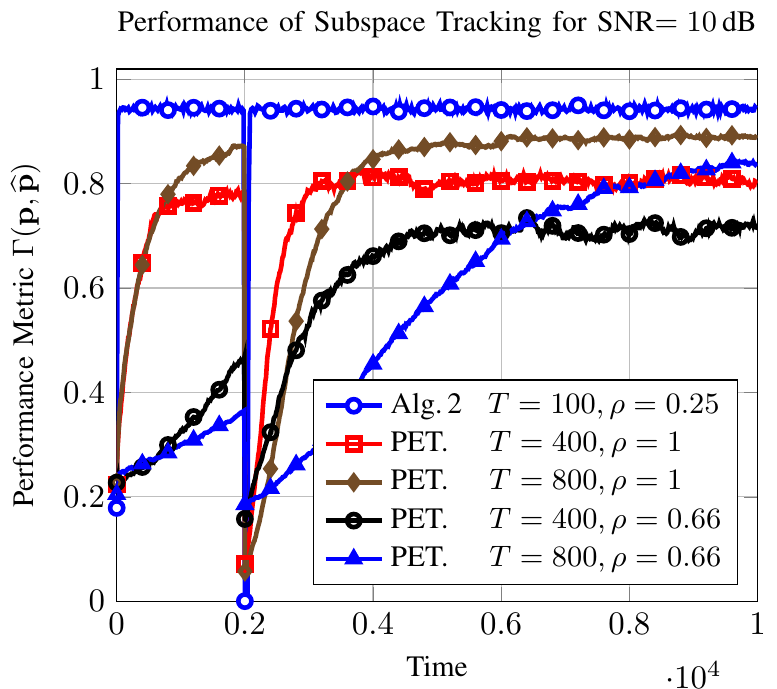}
\caption{{\small Comparison of tracking performance of our algorithm (Alg.\,2) with that of PETRELS for SNR$=10$\,dB.}}
\label{fig:comp_with_petrels}
\end{figure}

\subsection{Simplified Subspace Tracking}
As explained in Section \ref{simple_tracking}, a more low-complexity estimate of the signal subspace at each time $t$ can be obtained by identifying the dominant rows of the weighting matrix $\bfW(t)$. Fig.\,\ref{fig:image_tracking} illustrates the strength of different rows of $\bfW(t)$ at time $t$, corresponding to the estimated received power from different angular grid element in a tracking mode during $t=1, \dots, 400$. We assume that as in the simulations in Section \ref{tracking_perf}, the received angular power distribution of the user undergoes a sharp transition from the angular range $\Theta=[10, 30]$ to $\Theta'=[-40, -20]$ at time $t_\mathsf{tr}=200$. It is seen from Fig.\,\ref{fig:image_tracking} that although there are some spurious rows, our proposed algorithm tracks the location (support) of dominant rows corresponding to the AoAs of the scatterers very well. 
\begin{figure}[t]
\centering
\includegraphics{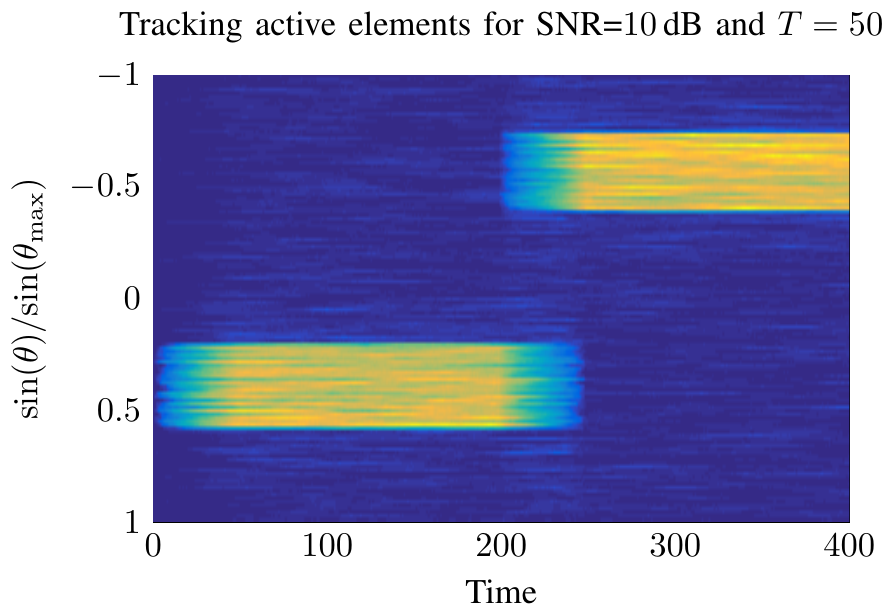}
\caption{An image of the strength of different grid elements estimated by our proposed algorithm during a tracking period $t=1, \dots, 400$. The angular power distribution of the user makes a transition from $\Theta=[10,30]$ to $\Theta'=[-40,-20]$ at time $t_\mathsf{tr}=200$. }
\label{fig:image_tracking}
\end{figure}

\section{Conclusion}
In this paper, we proposed an efficient and low-complexity subspace estimation algorithm, with a special focus towards massive MIMO applications. We mainly studied the AML Algorithm proposed in \cite{haghighatshoar2016channel, haghighatshoar2016massive}, where we showed that the quite slow and time-consuming SDP optimization of AML Algorithm in \cite{haghighatshoar2016channel, haghighatshoar2016massive} (especially when the antenna size $M$ is quite large) can be well approximated with another convex optimization problem, for which we derived a novel iterative low-complexity algorithm. We also considered a generalization of the original AML Algorithm in which the projection (sampling) operator may be time-variant, resulting in a further improvement in subspace estimation. We explained how our  proposed algorithm can be applied to more practical array configurations such as 2D rectangular lattice arrays and provided guidelines for efficient numerical implementation for general array configurations. We also extended our proposed algorithm such that it can be run in the online tracking mode. We evaluated the estimation/tracking performance of our algorithm empirically via numerical simulations and compared it with the performance of other state of the art subspace estimation/tracking algorithms in the literature.

\appendices
\section{Proof of Proposition \ref{convex_eq}}\label{app:convex_eq}
The proof follows by extending Theorem 1 in \cite{steffens2016compact}. The key observation is that for a vector $\bfw \in \bC^T$, the $l_2$-norm $\|\bfw\|$ of $\bfw$ can be written as the output of the following optimization 
\begin{align}\label{eq:norm_rep}
\|\bfw\|=\min_{\bfv \in \bC^T,\,s \in \bC:\, s\bfv =\bfw} \frac{\|\bfv\|^2 + |s|^2}{2}.
\end{align}
In particular, denoting by $(\bfv^*,s^*)$ the optimal solution of \eqref{eq:norm_rep}, we have that $\|\bfw\|=|s^*|^2$. Applying this to the rows of a $G\times T$ coefficient matrix $\bfW$ as in \eqref{eq:l2_1}, we obtain that 
\begin{align}\label{l21_rep}
\|\bfW\|_{2,1}=\min_{\bfV\in \bC^{G\times T},\, \Gammam\in \clD:\, \Gammam \bfV=\bfW} \frac{\|\bfV\|^2 + \|\Gammam\|^2}{2},
\end{align}
where $\clD$ denotes the space of $G\times G$ diagonal matrices with diagonal elements in $\bC$, and where $\Gammam=\diag(\gamma_1, \dots, \gamma_G)\in \clD$. In particular, similar to \eqref{eq:norm_rep} we have that $\|\bfW_{i,.}\|=|\gamma_i^*|^2$, where $\Gammam^*=\diag(\gamma_1^*, \dots, \gamma_G^*)$ is the optimal solution of \eqref{l21_rep}. Replacing $\|\bfW\|_{2,1}$ in \eqref{eq:l2_1} with \eqref{l21_rep}, we can transform \eqref{eq:l2_1} into the following optimization problem 
\begin{align*}
(\bfV^*, \Gammam^*)=\argmin_{\bfV\in \bC^{G\times T},\,\Gammam\in \clD} \frac{\|\check{\bfG} \Gammam \bfV - \bfX\|^2}{\sqrt{T}} + \|\bfV\|^2 + \|\Gammam\|^2.
\end{align*}
For a fixed $\Gammam$, the minimizing matrix $\bfV$ as a function of $\Gammam$ can be obtained via a least-square minimization, where after replacing the solution and applying the \textit{matrix inversion lemma} \cite{hager1989updating} and further simplifications, we obtain the following optimization in terms of $\Gammam$
\begin{align*}
\Gammam^*=\argmin_{\Gammam \in \clD} \tr\Big ( (\check{\bfG} \frac{ \Gammam \Gammam^\herm }{\sqrt{T} }\check{\bfG}^\herm + \bfI_m)^{-1} \widehat{\bfC}_x \Big ) + \tr(\frac{\Gammam \Gammam^\herm}{\sqrt{T}}).
\end{align*}
This optimization can be reparameterized with $\bfP=\frac{\Gammam \Gammam^\herm}{\sqrt{T}}=\diag(p_1, \dots, p_G) \in \clD_+$, where $p_i=\frac{|\gamma_i|^2}{\sqrt{T}} \in \bR_+$, for $i\in [G]$, and where $\clD_+$ denotes the set of all $G\times G$ diagonal matrices with positive diagonal elements. With this parametrization, we obtain the following optimization for the matrix $\bfP \in \clD_+$:
\begin{align}\label{l21_rep3}
\bfP^*=\argmin_{\bfP \in \clD_+} \tr\Big ( (\check{\bfG} \bfP \check{\bfG}^\herm + \bfI_m)^{-1} \widehat{\bfC}_x \Big ) + \tr(\bfP),
\end{align}
Moreover, denoting by $\bfP^*=\diag(p_1^*, \dots, p_G^*)$ the solution of \eqref{l21_rep3}, we have the following relation 
\begin{align}\label{good_rel}
p^*_i=\frac{|\gamma^*_i|^2}{\sqrt{T}}=\frac{\|\bfW^*_{i,.}\|}{\sqrt{T}},
\end{align}
 between the optimal solution $\bfW^*$ of \eqref{eq:l2_1} and the optimal solution $\bfP^*=\diag(p^*_1, \dots, p^*_G)$ of \eqref{l21_rep3}.  Note that as in Section \ref{sec:eq_conv}, we assume that the grid $\clG$ is dense enough such that any signal covariance matrix can be well approximated by 
\begin{align}\label{eq:s_rel}
\bfS \approx \frac{1}{m} \bfG \bfP \bfG^\herm=\frac{1}{m} \bfG\, \diag(p_1,\dots, p_G) \bfG^\herm,
\end{align}
 for some appropriate $\bfP \in \clD_+$ with $p_i\geq 0$, $i \in [G]$. This implies that the term $\check{\bfG} \bfP \check{\bfG}^\herm$ in \eqref{l21_rep3} can be replaced with $\bfB \bfS \bfB^\herm$ (recall that $\check{\bfG}=\frac{1}{\sqrt{m}} \bfB \bfG$), where $\bfS$ takes values from the convex set of all feasible signal covariance matrices, which in the case of a ULA coincides with $\clT_+$. Also note that due to 0-1 sampling matrix $\bfB$ and the special structure of the array responses, every column of $\check{\bfG}$ has a unit $l_2$-norm, which implies that 
\begin{align}
\tr(\check{\bfG} \bfP \check{\bfG}^\herm)=\sum_{i=1}^G p_i \|\check{\bfg}_i\|^2=\sum_{i=1}^G p_i=\tr(\bfP),
\end{align}
where $\check{\bfg}_i$ denotes the $i$-th column of $\check{\bfG}$, which has a unit norm. Again replacing $\check{\bfG} \bfP \check{\bfG}^\herm$ by $\bfB \bfS \bfB^\herm$, it results that \eqref{l21_rep3} is well approximated by the following convex optimization
\begin{align}\label{last_eq}
\bfS^*=\argmin_{\bfS \in \clT_+} \tr\Big ( (\bfB \bfS \bfB^\herm + \bfI_m)^{-1} \widehat{\bfC}_x \Big ) + \tr(\bfB \bfS \bfB^\herm).
\end{align}
Using the well-known Schur's complement condition for positive semi-definiteness (see \cite{boyd1994linear} page 28), we can write \eqref{last_eq} in the form of SDP \eqref{eq:aml} for the AML Algorithm as in \cite{haghighatshoar2016channel,haghighatshoar2016massive}. In particular, having the optimal solution $\bfW^*$ of \eqref{eq:l2_1}, or the optimal solution $\bfP^*$ of \eqref{l21_rep3}, from the parametrization in \eqref{eq:s_rel}, the optimal solution $\bfS^*$ of the SDP \eqref{eq:aml} can be approximated by $\bfS^*={\bfG} \,\diag(s^*_1, \dots, s^*_G) {\bfG}^\herm$, where $s^*_i=\frac{p^*_i}{m}=\frac{\|\bfW^*_{i,.}\|}{m\sqrt{T}}$, for $i\in[G]$, and where we used \eqref{good_rel}. This completes the proof.

\section{Proof of Proposition \ref{sorrogate}}\label{app:sorrogate}
Let $s \in [0,1]$ and let us define $\Deltam(s):=\bfW'+s(\bfW-\bfW')$ and $h(s):=f_1(\Deltam(s))$. We have
\begin{align*}
f_1(\bfW)&-f_1(\bfW')=f_1(\Deltam(1))-f_1(\Deltam(0))\\
&=h(1)-h(0)=\int_{0}^1 h'(s) ds\\
&=\int_{0}^1 \inpr{\nabla f_1(\Deltam(s))}{\bfW-\bfW'} ds\\
&=\inpr{\nabla f_1(\bfW')}{\bfW-\bfW'} \\
&+ \int_{0}^1 \inpr{\nabla f_1(\Deltam(s))- \nabla f_1(\bfW')}{\bfW-\bfW'} ds\\
&\stackrel{(a)}{\leq} \inpr{\nabla f_1(\bfW')}{\bfW-\bfW'} \\
&+ \int_0^1  \beta \|\Deltam(s)-\bfW'\| \|\bfW - \bfW'\| ds\\
&= \inpr{\nabla f_1(\bfW')}{\bfW-\bfW'} + \int_0^1 s \beta \|\bfW-\bfW'\|^2ds\\
&=\inpr{\nabla f_1(\bfW')}{\bfW-\bfW'} + \frac{\beta}{2} \|\bfW-\bfW'\|^2,
\end{align*}
where in $(a)$ we used the Cauchy-Schwarz inequality and the Lipschitz property of $\nabla f_1$. This completes the proof.

\balance
\bibliographystyle{IEEEtran}
{\small
\bibliography{references}}

\end{document}